\newtheorem{theorem}{Theorem}
\newtheorem{corollary}{Corollary}
\newtheorem{prop}{Proposition}
\newtheorem{claim}{Claim}[theorem]
\newtheorem{defn}{Definition}
\newcounter{case}
\newcounter{subcase}[case]
\newcommand{\Case}{%
  \refstepcounter{case}% allows referencing if needed
  \par\noindent \textit{Case \thecase.}\quad
}
\newcommand{\Subcase}{%
  \refstepcounter{subcase}%
  \par\noindent \textit{Case \thecase.\thesubcase.}\quad
}
\begin{document}
\title{Secure Domination in Bisplit graphs - A Structural and algorithmic study
%Short cycles dictate dichotomy status of the Steiner tree problem on Bisplit graphs
%Short cycles dictate dichotomy status of STREE on bisplit graphs
%Complexity of NP-Hard problems on bisplit graphs with short cycles
\thanks{This work is partially supported by NBHM-02011/24/2023/6051 and ANRF (DST)-CRG/2023/007127.} }
\author{Swathi D and
N Sadagopan\\
Indian Institute of Information Technology, Design and Manufacturing, Kancheepuram\\
\date{}
\texttt{cs24d0013@iiitdm.ac.in, sadagopan@iiitdm.ac.in}}
\maketitle
	\begin{abstract}
 A dominating set $S$ of a graph $G(V,E)$ is called a \textit{secure dominating set} if each vertex $u \in V(G) \setminus S$ is adjacent to a vertex $v \in S$ such that $(S \setminus \{v\}) \cup \{u\}$ is a dominating set of $G$. The \textit{secure domination number} $\gamma_s(G)$ of $G$ is the minimum cardinality of a secure dominating set of $G$. The \textit{Minimum Secure Domination problem} is to find a secure dominating set of a graph $G$ of cardinality $\gamma_s(G)$. In this paper, the computational complexity of  the secure domination problem on several graph classes is investigated. The decision version of secure domination problem was shown to be NP-complete on  star(comb) convex split graphs and bisplit graphs. So we further focus on complexity analysis of secure domination problem under additional structural restrictions on bisplit graphs. In particular, by imposing chordality as a parameter, we analyse its impact on the computational status of the problem on bisplit graphs. We establish the P versus NP-C dichotomy status of secure domination problem under restrictions on cycle length within bisplit graphs. In addition, we establish that the problem is polynomial-time solvable in chain graphs. We also prove that the secure domination problem cannot be approximated for a bisplit graph within a factor of $(1-\epsilon)~ln~|V|$ for any $\epsilon > 0$, unless $NP \subseteq DTIME(|V|^{O(log~log~|V|)})$.
 \end{abstract}
	\section{Introduction:}
	Domination and its variants in graph theory have been extensively studied due to their wide range of applications in areas such as computer networks, social networks, and locational studies. A practical scenario arises in securing a spatial domain such as a military terrain, university campus, or residential complex where guards must be positioned in such a way that every location is either directly guarded or can be  reached by a guard from a neighbouring location, without compromising on overall security. This setup can be modelled using graphs, where vertices represent locations and edges represent accessible paths for guard movement. The challenge is to identify a set of vertices such that every vertex is either in the set or adjacent to a vertex in the set, forming a dominating set. Additionally, for every vertex not in the set, there must be a neighbouring vertex in the set such that swapping their positions still results in a dominating set. The objective is to find such a minimum set of vertices. This precisely defines the $secure~domination~problem$.\\
Let $G = (V, E)$ be a graph. A subset $D \subseteq V(G)$ is a \textit{dominating set} of $G$ if every vertex in $V(G) \setminus D$ has at least one neighbour in $D$. The \textit{domination number} $\gamma(G)$ of $G$ is the minimum cardinality of a dominating set of $G$. The \textit{domination problem} is to find a minimum dominating set of a graph.
A variation of domination, \text{secure domination}, was introduced by Cockayne \textit{et al.}\cite{cockayne}. A dominating set $S$ of $G$ is called a \textit{secure dominating set} if each vertex $u \in V(G) \setminus S$ is adjacent to a vertex $v \in S$ such that $(S \setminus \{v\}) \cup \{u\}$ is a dominating set of $G$. The \textit{secure domination number} $\gamma_s(G)$ of $G$ is the minimum cardinality of a secure dominating set of $G$. The \textit{Minimum Secure Domination problem} $MSD$ is to find a secure dominating set of a graph $G$ of cardinality $\gamma_s(G)$.
Let  \textit{DD} denote the decision version of the domination problem and let \textit{SDD} denote the decision version of the secure domination problem.\\
\textit{Domination Problem (DD):}\\
\textit{Instance:} A graph $G = (V, E)$ and a positive integer $k \leq |V|$.\\
\textit{Question:} Does $G$ have a dominating set of cardinality at most $k$?\\
\textit{Secure Domination Problem (SDD):}\\
\textit{Instance:} A graph $G = (V, E)$ and a positive integer $k \leq |V|$.\\
\textit{Question:} Does $G$ have a secure dominating set of cardinality at most $k$?\\
The problem of secure domination was introduced by $Cockayne~et~al$., who investigated some fundamental properties of a secure dominating set, and also obtained exact values of  $\gamma_s(G)$ for some graph classes, such as paths, cycles and complete multipartite graphs\cite{cockayne}. Various properties and characterisations of secure domination set  have been researched \cite{Burger31032016, vdsd, li2017secure, merouane2015secure}. Upper and lower bounds on $\gamma_s(G)$ have been established for some graph classes \cite{tarakioutplan, tarakimaxoutplan, vdsd, li2017secure, merouane2015secure}. The problem of computing secure domination number is investigated for some restricted graph classes and their complexity status is found. The problem is NP-complete in general, and remains NP-complete when restricted to bipartite and split graphs \cite{merouane2015secure}, star convex bipartite graphs and doubly chordal graphs \cite{wang2018complexity}, and chordal bipartite graphs and undirected path graphs \cite{pradhan2018minimum}. On the positive side, the problem is linear on trees (subclass of bipartite) \cite{burger2014linear}. On subclasses of chordal graph such as block graphs \cite{pradhan2018minimum}, and proper interval graphs \cite{araki2018secure, tarakicorrecting} the problem is linear as well. The problem is linear in cographs\cite{araki2019secure, KISEK2021106155}. In addition, the problem is APX-complete for graphs with maximum degree 4 and there exists an inapproximability result for the problem \cite{wang2018complexity}.\\

\noindent $SDD$ is found to be NP-complete on split graphs. To impose convexity property on graph classes and to investigate the complexity status has been considered important in the literature. It would be interesting to study the complexity of $SDD$ in convex split graphs. In this paper we present the study of complexity status of the secure domination problem in star (comb) convex split graphs. In section \ref{sec:s_sd} we observe that the decision version of the secure domination problem is NP-complete for star(comb) convex split graphs. Since $SDD$ is found to be NP-complete on bipartite graphs and a $bisplit~graph$ is bipartite analog of a split graph. A natural direction of research is to investigate the computational complexity of the problem on bisplit graphs. In Section \ref{sec:bs_sd} we show that $SDD$ on bisplit graphs is NP-complete. An interesting dichotomy is obtained by imposing chordality as a structural parameter on bisplit graphs. In Section \ref{ssec:cbs}, we provide an algorithm for computing the secure domination of a $chordal ~bisplit~ graph$. In Section \ref{ssec:cbbs}, we show that $SDD$ is NP-complete for $chordal~bipartite~bisplit~graphs$. In Section \ref{chain:sd} we establish $MSD$ is polynomial-time solvable in chain graphs. We also prove that the secure domination problem cannot be approximated for a bisplit graph within a factor of $(1-\epsilon)~ln~|V|$ for any $\epsilon > 0$, unless $NP \subseteq DTIME(|V|^{O(log~log~|V|)})$ in Section \ref{ssec:inapp_SD_bs}, and then we establish the problem is linear time solvable for bounded tree-width graphs in Section \ref{btw_sd}.
\section{Secure Domination in Split graphs:}\label{sec:s_sd}	
\begin{defn}
A split graph $G(K,I)$ is called a $\pi-convex$ with $convexity~on~K$ if there is an associated structure $\pi$ on $K$ such that for each vertex $u\in I$, its neighbourhood $N_G(u)$ induces a connected subgraph in $\pi$.
\end{defn}
\begin{defn}
A split graph $G(K,I)$ is called a $\pi-convex$ with $convexity~on~I$ if there is an associated structure $\pi$ on $I$ such that for each vertex $u\in K$, its neighbourhood $N_G^I (u)$ induces a connected subgraph in $\pi$.
\end{defn}	
\noindent In general $\pi$ can be any arbitrary structure. Here we consider $\pi$ to be  $star$ and $comb$. Note that the structure $\pi$ in $G$ is an imaginary structure.\\~\\
	The construction used to reduce domination problem in split graphs to secure domination problem in split graphs  \cite{merouane2015secure} also generates star and comb convex instances. So we use the same construction to show that the \textbf{\textit{secure domination problem is NP-complete for star-convex split graphs with convexity on K, star-convex split graph with convexity on I and comb-convex split graphs with convexity on I.}} 
\\~\\ $ \textbf {\textit{Construction:}} $ Let $G=K \cup I$ be a split graph whose vertex set is partitioned into a clique $K$ and an independent set $I$. Let $G^*$ be constructed from $G$ by adding a path $P_2 : x - y$ such that $x$ is adjacent to $y$ and all the vertices of $G$. The graph thus obtained is a split graph $G^*=K^*  \cup I^*$ where $K^*=K \cup \{x\}$, $I^*=I \cup \{y\}$. We note that $|V(G^*)|=|V(G)|+2$ and $|E(G^*)|=|E(G)|+|V(G)|+1$. Thus $G^*$ can be constructed from $G$ in polynomial-time.
\begin{figure}[htbp]
	\begin{center}
\begin{tikzpicture}[scale=0.8, transform shape]
    % The set X
    \node[draw, ellipse, minimum width=1.5cm, minimum height=5cm] (K) at (0,0) {$K$};
    
    % The sets Y
    \node[draw, ellipse, minimum width=1.5cm, minimum height=5cm] (I) at (5,0) {$I$};
       
     % The set X*
    \node[draw, ellipse, minimum width=3cm, minimum height=8cm] (K*) at (0,0) {};
    \node at (0,-3.5) {$K^*$};
    \filldraw[black] (0,-2.8) circle (2pt) node[anchor=north](x) {$x$};
    % The sets Y*
    \node[draw, ellipse, minimum width=3cm, minimum height=8cm] (I*) at (5,0) {};
     \node at (5,-3.5) {$I^*$};
    \filldraw[black] (5,-2.8) circle (2pt) node[anchor=west](y) {$y$};
    % The set G*
    \node[below=of I, xshift=-2cm, yshift=-1cm] (G*) {$G^*$};
     % Connections
    \draw[blue,out=110, in=200] (-0.2,0.5) to (0,-2.8);
      \draw[blue, out=110, in=200] (-0.2,-0.5) to (0,-2.8);
  \draw[blue, out=120, in=200] (-0.2,1.5) to (0,-2.8); 
  \draw[blue, out=110, in=150] (-0.2,-1.5) to (0,-2.8);
    \draw[blue] (4.8,-0.5) edge (0,-2.8);
    \draw[blue](4.8,0.5) edge (0,-2.8);
  \draw[blue](4.8,1.5) edge (0,-2.8); 
  \draw[blue] (4.8,-1.5) edge (0,-2.8);
      \draw[blue] (5,-2.8) -- (0,-2.8);
\end{tikzpicture}
\end{center}
 \caption{$DD$ in split $\leq_p $ $SDD$ in split}
  \label{fig:splitredn}
\end{figure}
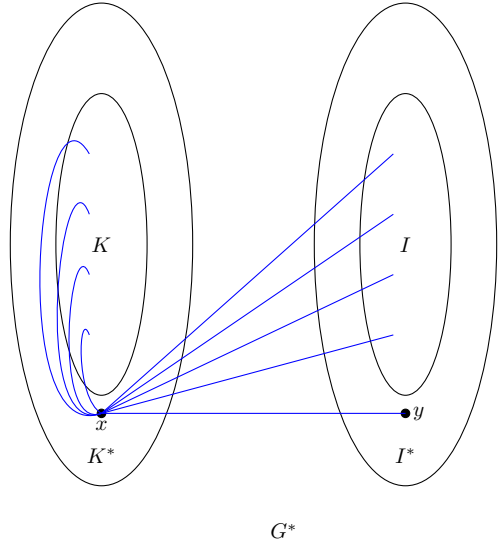
\begin{theorem}\label{th:1}
$SDD$ is NP-complete for star-convex split graphs with convexity on K.
\end{theorem}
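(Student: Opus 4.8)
The plan is to prove both membership in NP and NP-hardness. Membership is routine: given a candidate set $S$, one checks in polynomial time that $S$ dominates $G^*$ and that for every $u \notin S$ there is a neighbour $v \in S$ with $(S \setminus \{v\}) \cup \{u\}$ dominating, so $SDD$ restricted to this class lies in NP. For hardness I would reduce from the domination decision problem $DD$ on split graphs, which is known to be NP-complete, using exactly the construction $G \mapsto G^*$ described above; since $|V(G^*)| = |V(G)| + 2$ this is a polynomial-time transformation. First I would confirm that $G^*$ is genuinely split: $K^* = K \cup \{x\}$ is a clique because $x$ is adjacent to all of $K$, and $I^* = I \cup \{y\}$ is independent because $y$'s only neighbour is $x \notin I^*$.

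The key new step is to exhibit the star structure witnessing convexity on $K^*$. I would place an (imaginary) star $\pi$ on $K^*$ with centre $x$ and every vertex of $K$ as a leaf. Then for each $u \in I^*$ I must check that $N_{G^*}(u)$ induces a connected subgraph of $\pi$. For $u = y$ we have $N_{G^*}(y) = \{x\}$, a single (central) vertex, hence connected. For $u \in I$ we have $N_{G^*}(u) = N_G(u) \cup \{x\}$ with $N_G(u) \subseteq K$, so its image in $\pi$ is the centre $x$ together with a set of leaves, which is a connected substar. This establishes that every instance produced by the construction is star-convex with convexity on $K$.

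It remains to prove correctness, for which I would show $\gamma_s(G^*) = \gamma(G)+1$. For the easy inequality, given a minimum dominating set $D$ of $G$ I claim $S = D \cup \{x\}$ is secure dominating: the pendant $y$ is defended by $x$ since $(S \setminus \{x\}) \cup \{y\} = D \cup \{y\}$ still dominates (here $y$ covers $x$ and $D$ covers $V(G)$), while any $u \in V(G) \setminus D$ has a neighbour $v \in D$ (as $D$ dominates), and defending $u$ with $v$ leaves $x$ in the set, which dominates all of $G^*$. Hence $\gamma_s(G^*) \le \gamma(G)+1$.

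For the reverse inequality, given a secure dominating set $S$ I would extract a dominating set of $G$ of size $|S|-1$ by a case analysis on $S \cap \{x,y\}$, writing $S_G = S \cap V(G)$. If $x \notin S$, then $y \in S$ and every $w \in V(G)\setminus S$ is forced to have a $G$-neighbour in $S_G$, so $S_G$ dominates $G$. If $x \in S$ but $y \notin S$, the defence of the pendant $y$ must use $x$, forcing $(S\setminus\{x\})\cup\{y\} = S_G \cup \{y\}$ to dominate, which again makes $S_G$ a dominating set of $G$. The delicate case, and the one I expect to be the main obstacle, is $x, y \in S$: here the vertices $U \subseteq V(G)$ dominated \emph{only} by $x$ need not be covered by $S_G$, so I must replace $x$ by a single vertex of $G$. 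The crucial observation is that for any $u \in U$ the only neighbour of $u$ in $S$ is $x$, so $u$ is defended by $x$, and the requirement that $(S \setminus \{x\}) \cup \{u\}$ dominate every other $u' \in U$ forces $u$ and $u'$ to be adjacent; thus $U$ is a clique, and adding any single vertex $u_0 \in U$ to $S_G$ dominates all of $U$. In each case the resulting dominating set has size at most $|S|-1$, giving $\gamma(G) \le \gamma_s(G^*)-1$ and hence the equality. Combining the two inequalities, $G$ has a dominating set of size at most $k$ if and only if $G^*$ has a secure dominating set of size at most $k+1$, completing the reduction.
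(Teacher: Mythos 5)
Your proposal is correct and follows essentially the same route as the paper: the same reduction from $DD$ on split graphs via the gadget $G^*$, and the same imaginary star on $K^*$ with centre $x$ and the vertices of $K$ as leaves to witness convexity. The only difference is that the paper cites the equivalence $DD(G,k) \Leftrightarrow SDD(G^*,k+1)$ from the literature, whereas you prove it directly; your case analysis (in particular the observation that the vertices privately dominated by $x$ form a clique, which is the external-private-neighbour argument of Proposition \ref{p:epn_sd}) is sound and makes the argument self-contained.
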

\begin{proof}
$SDD~ is ~in ~NP:$ Given a graph $G(V,E)$ and a certificate $D \subseteq V (G)$, we show that there exists a deterministic polynomial-time algorithm for verifying the validity of $D$. Note that it is easy to check whether $|D| \leq k$. For each vertex in $V \setminus D$ we may scan through its adjacency list and ensure it has a vertex of $D$ this takes atmost $n^2$ comparisons, where $n$ is the cardinality of set $V$. Now for each vertex in $V \setminus D$ there are atmost $k$ neighbours in D. So to verify whether the swap set is dominating takes atmost $n^3$ steps. The certificate verification can be done in $O(n^3)$. Thus, we conclude that SDD is in NP.\\
$SDD~ is ~in ~NP-Hard:$ Let $G(K,I)$ be a split graph. It is known that $DD$ in split is NP-complete\cite{BERTOSSI198437}. $DD$ in split can be reduced in polynomial-time to $SDD$ on star-convex split graphs with convexity on $K^*$ using the construction in Figure \ref{fig:splitredn}. Note that $G$ is a split graph with $K$ being a clique and $I$ being an independent set. Now we show that $G^*$ is a star-convex split graph by defining an imaginary star $S$ on $K^*$.
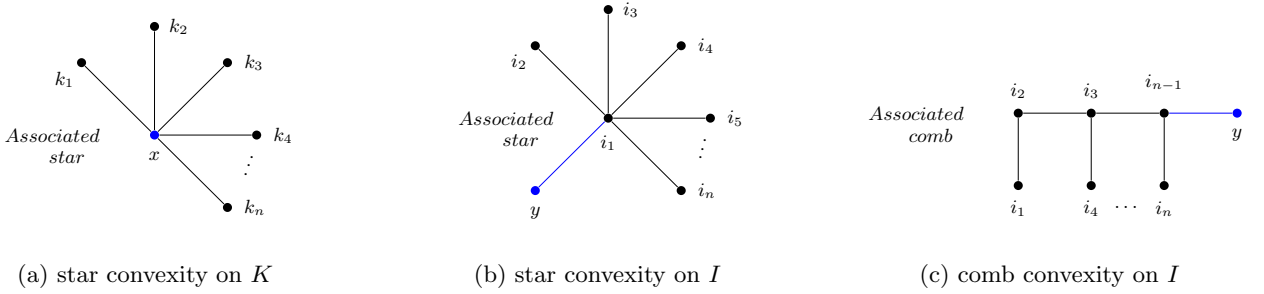
\begin{figure}[H]
    \centering
    \begin{subfigure}[b]{0.3\textwidth}
         \begin{center} \scalebox{0.8}{
\begin{tikzpicture}[scale=1.2, every node/.style={circle, fill=black, inner sep=1.5pt}, font=\small]

% Coordinates
\node[blue] (v5) at (0,0) {};
\node (v1) at (-1,1) {};
\node (v2) at (0,1.5) {};
\node (v3) at (1,1) {};
\node (v4) at (1,-1) {};
\node (v6) at (1.4,0) {};
% Edges
\draw (v5) -- (v1);
\draw (v5) -- (v2);
\draw (v5) -- (v3);
\draw (v5) -- (v4);
\draw (v5) -- (v6);

% Labels
\node[draw=none, fill=none, below left=1pt of v1]{$k_1$} ;
\node[draw=none, fill=none, right=1pt of v2] {$k_2$};
\node[draw=none, fill=none, right=2pt of v3] {$k_3$};
\node[draw=none, fill=none, right=2pt of v6] {$k_4$};
\node[draw=none, fill=none, right=2pt of v4] {$k_n$};
\node[draw=none, fill=none, rotate=70] at (1.3,-0.4) {$\dots$};
\node[draw=none, fill=none, below=2pt of v5] {$x$};

% Text label
\node[draw=none, fill=none]  at (-1.4,0){$Associated$};
\node[draw=none, fill=none] at (-1.2,-0.3) {$star$};

\end{tikzpicture}}
\end{center}

        \caption{star convexity on $K$}
         \label{fig:SconvsplitK}
    \end{subfigure}
    \hfill
    \begin{subfigure}[b]{0.3\textwidth}
         \begin{center} \scalebox{0.8}{
\begin{tikzpicture}[scale=1.2, every node/.style={circle, fill=black, inner sep=1.5pt}, font=\small]

% Coordinates
\node (v5) at (0,0) {};
\node (v1) at (-1,1) {};
\node (v2) at (0,1.5) {};
\node (v3) at (1,1) {};
\node (v4) at (1,-1) {};
\node (v6) at (1.4,0) {};
\node [blue] (v7) at (-1,-1) {};
% Edges
\draw (v5) -- (v1);
\draw (v5) -- (v2);
\draw (v5) -- (v3);
\draw (v5) -- (v4);
\draw (v5) -- (v6);
\draw [blue] (v5) -- (v7);
% Labels
\node[draw=none, fill=none, below left=1pt of v1]{$i_2$} ;
\node[draw=none, fill=none, right=1pt of v2] {$i_3$};
\node[draw=none, fill=none, right=2pt of v3] {$i_4$};
\node[draw=none, fill=none, right=2pt of v6] {$i_5$};
\node[draw=none, fill=none, right=2pt of v4] {$i_n$};
\node[draw=none, fill=none, below=2pt of v5] {$i_1$};
\node[draw=none, fill=none, below=2pt of v7] {$y$};
\node[draw=none, fill=none, rotate=75] at (1.3,-0.4) {$\dots$};

% Text label
\node[draw=none, fill=none]  at (-1.4,0){$Associated$};
\node[draw=none, fill=none] at (-1.2,-0.3) {$star$};

\end{tikzpicture}}
\end{center}
        \caption{star convexity on $I$}
         \label{fig:SconvsplitI}
    \end{subfigure}
    \hfill
    \begin{subfigure}[b]{0.3\textwidth}
      \begin{center} \scalebox{0.8}{
\begin{tikzpicture}[scale=1.2, every node/.style={circle, fill=black, inner sep=1.5pt}, font=\small]

% Coordinates
\node (v1) at (0,0) {};
\node (v2) at (0,-1) {};
\node (v3) at (1,0) {};
\node (v4) at (1,-1) {};
\node (v5) at (2,0) {};
\node (v6) at (2,-1) {};
\node [blue] (v7) at (3,0) {};
% Edges
\draw (v1) -- (v2);
\draw (v1) -- (v3);
\draw (v4) -- (v3);
\draw (v5) -- (v3);
\draw (v5) -- (v6);
\draw [blue] (v5) -- (v7);
% Labels
\node[draw=none, fill=none,  above=1pt of v1]{$i_2$} ;
\node[draw=none, fill=none, below=1pt of v2] {$i_1$};
\node[draw=none, fill=none, above=1pt of v3] {$i_3$};
\node[draw=none, fill=none, below=1pt of v4] {$i_4$};
\node[draw=none, fill=none, above=0.1 pt of v5] {$i_{n-1}$};
\node[draw=none, fill=none,  below=1pt of v6]{$i_n$} ;
\node[draw=none, fill=none, below=2pt of v7] {$y$};
\node[draw=none, fill=none] at (1.5,-1.3) {$\dots$};

% Text label
\node[draw=none, fill=none]  at (-1.4,0){$Associated$};
\node[draw=none, fill=none] at (-1.2,-0.3) {$comb$};

\end{tikzpicture}}
\end{center}
        \caption{comb convexity on $I$}
         \label{fig:CconvsplitI}
    \end{subfigure}
    \caption{Convexity on Split graphs}
    \label{fig:convsplit}
\end{figure}
\begin{claim}
$G^*$ is a star-convex split graph with convexity on $K^*$.
\end{claim}
\begin{proof}
We show that $G^*$ is a star-convex split graph by defining an imaginary star $S$ on $K^*$. Note that
for each $u \in I^*$ its neighbourhood $N_G(u)$ contains $x$. There exist an associated star $S$ on vertices of $K^*$ with  $x$ as root and other vertices of $K$ say $k_1, k_2, \dots k_n$ as leaves as shown in Figure  \ref{fig:SconvsplitK}. Therefore, for each $u \in I$, $N_G(u)$ is a subtree in S. Hence $G^*$ is a star-convex split graph with convexity on $K^*$.
\end{proof}
\begin{claim}\label{cl:split} \cite{merouane2015secure}
DD  $(G, k)$ if and only if SDD $(G^*, k+1)$.
\end{claim}
Thus we conclude $SDD$ is NP-Hard on the star-convex split graphs with convexity on $K$. Therefore, $SDD$ is NP-complete on star-convex split graphs with convexity on $K$.
\end{proof}
\begin{theorem}\label{th:2}
$SDD$ is NP-complete for star-convex split graphs with convexity on I.
\end{theorem}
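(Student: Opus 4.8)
The plan is to mirror the proof of Theorem \ref{th:1} as closely as possible, reusing the construction $G^{*}=K^{*}\cup I^{*}$ and Claim \ref{cl:split} verbatim; the only genuinely new ingredient is to exhibit an imaginary star on $I^{*}$ (rather than on $K^{*}$) witnessing convexity. Membership in NP is identical to the argument in Theorem \ref{th:1}, since the same $O(n^{3})$ certificate-checking procedure applies unchanged, so I would simply cite it. For hardness I would again reduce from $DD$ on split graphs and invoke Claim \ref{cl:split} to obtain $DD(G,k)$ if and only if $SDD(G^{*},k+1)$, so that everything reduces to certifying that $G^{*}$ lies in the class of star-convex split graphs with convexity on $I^{*}$.

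To certify convexity on $I^{*}$, recall that in a star a vertex subset induces a connected subgraph precisely when it is a singleton or it contains the centre. I would therefore root the star at a vertex $c\in I^{*}$ that lies in $N_{G^{*}}^{I^{*}}(u)$ for every $u\in K^{*}$, and make all remaining vertices of $I^{*}$ (including $y$) its leaves, as in Figure \ref{fig:SconvsplitI}. With such a $c$ the verification is immediate: $N_{G^{*}}^{I^{*}}(x)=I^{*}$ contains $c$, and for each original clique vertex $k_{i}$ we have $N_{G^{*}}^{I^{*}}(k_{i})=N_{G}^{I}(k_{i})\ni c$, so every clique vertex's neighbourhood within $I^{*}$ is a subtree of the star.

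The main obstacle is guaranteeing that such a centre $c$ exists, i.e.\ a vertex of $I^{*}$ adjacent to all of $K^{*}$. Since $y$ is adjacent only to $x$, this forces $G$ itself to contain a vertex of $I$ adjacent to all of $K$, which an arbitrary split graph need not have. I would remove this obstruction by showing that $DD$ remains NP-complete on split graphs that do carry such a vertex. Given an arbitrary split graph $G=K\cup I$ with $K\neq\emptyset$, add one vertex $c$ to $I$ adjacent to every vertex of $K$, obtaining $G'$. The key lemma is that some minimum dominating set of a split graph with a non-empty clique always meets $K$: if a dominating set avoided $K$ it would be forced to contain all of $I$, and then, choosing a clique vertex with an $I$-neighbour and replacing its $I$-neighbourhood by that single clique vertex, one obtains a dominating set of no larger cardinality that does meet $K$. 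Using this lemma, a minimum dominating set of $G$ already dominates $c$ in $G'$, while conversely any minimum dominating set of $G'$ may be assumed to avoid $c$ (replace $c$ by a clique vertex, which dominates at least everything $c$ did), whence $\gamma(G')=\gamma(G)$. Thus $DD$ is NP-complete on split graphs possessing an $I$-vertex universal to $K$.

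Finally I would feed such a $G'$, with its universal vertex $c$, into the construction, take $c$ as the star centre on $I'^{*}$, and conclude via Claim \ref{cl:split} that $SDD$ is NP-hard on star-convex split graphs with convexity on $I$; combined with NP membership this yields NP-completeness. The delicate point throughout is precisely the controlled-value step $\gamma(G')=\gamma(G)$ and the supporting lemma that a minimum dominating set can be chosen to meet the clique; once these are in place, both the convexity certification and the reduction are routine.
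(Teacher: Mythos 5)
Your proposal is correct, but it reaches the result by a genuinely different route from the paper. The paper does not start from $DD$ on arbitrary split graphs: it cites the NP-completeness of $DD$ on star-convex split graphs \emph{with convexity on $I$} (reference \cite{NSGstreeds}) and then only has to check that the construction of Figure \ref{fig:splitredn} preserves the given convexity structure -- the imaginary star $T$ on $I$ is simply extended to $T'=T\cup\{y\}$ by hanging $y$ as a new leaf, which keeps every $N_{G^*}^{I^*}(v)$ connected since these neighbourhoods are unchanged for $v\in K$ and equal to all of $I^*$ for $v=x$. You instead reduce from $DD$ on general split graphs and \emph{manufacture} the convexity witness: you add an independent vertex $c$ universal to $K$, prove the exchange lemma that some minimum dominating set meets $K$ so that $\gamma(G')=\gamma(G)$, and then centre the star at $c$. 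Your exchange lemma and the equality $\gamma(G')=\gamma(G)$ are sound (modulo the harmless assumption $K\neq\emptyset$), and the composite map $G\mapsto G'\mapsto G'^{*}$ together with Claim \ref{cl:split} is a valid polynomial-time reduction, so the argument goes through. The trade-off: the paper's route is shorter but leans on an external hardness result for the already-convex class, and it shows the reduction preserves \emph{arbitrary} associated stars; your route is self-contained (in effect you re-prove the special case of the cited result that you need) at the cost of producing only instances whose star has the restricted form of a centre adjacent to every clique vertex -- which is a narrower subclass, but of course still suffices for NP-hardness of the full class. One stylistic remark: a split graph with an $I$-vertex universal to $K$ is automatically star-convex on $I$, so your intermediate graph $G'$ already witnesses NP-completeness of $DD$ on the convex class; stating that explicitly would make your proof an alternative derivation of the ingredient the paper imports.
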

\begin{proof}
$SDD~ is ~in ~NP-Hard:$ Let $G(K,I)$ be a split graph. It is known that $DD$ in star convex split with convexity on $I$ is NP-complete \cite{NSGstreeds}. $DD$ in star convex split graphs with convexity on $I$ can be reduced in polynomial-time to $SDD$ on star-convex split graphs with convexity on $I^*$ using the construction in Figure\ref{fig:splitredn}. Let $G$ be a star convex split graph with convexity on $I$. Let $T$ be the imaginary star associated with vertices of $I$. Now we show that $G^*$ is a star-convex split graph with convexity on $I^*$ by defining an imaginary star $T'$ on $I^*$.
\begin{claim}
$G^*$ is a star-convex split graph with convexity on $I^*$.
\end{claim}
\begin{proof}
We show that $G^*$ is a star-convex split graph by defining an imaginary star $T'$ on $I^*$. For each $v \in K^*$ its neighbourhood $N_G^{I^*}(v)$ remains the same except for $x \in K^*$. Note that $x$ is adjacent to all vertices of $I^*$. We may retain the same star $T$  associated with $I$ and add a  leaf $\{y \}$ to $T$. So, there exist an associated star $T' = T \cup \{ y \}$ on vertices of $I^*$ say $i_1, i_2, \dots i_n, y$ as shown in Figure \ref{fig:SconvsplitI}.Thus $G^*$ is also star-convex split graph with convexity on $I^*$. 
 \end{proof}
\noindent Also by Claim \ref{cl:split} we conclude $SDD$ is NP-Hard on the star-convex split graphs with convexity on $I$. Therefore, $SDD$ is NP-complete on star-convex split graphs with convexity on $I$.
\end{proof}
\begin{theorem}\label{th:3}
$SDD$ is NP-complete for comb-convex split graphs with convexity on I.
\end{theorem}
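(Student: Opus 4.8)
The plan is to follow the template already established for Theorems \ref{th:1} and \ref{th:2}. Membership in NP is settled once and for all in the proof of Theorem \ref{th:1}, so I only need to establish NP-hardness. The natural source problem is the domination problem restricted to comb-convex split graphs with convexity on $I$, which is known to be NP-complete \cite{NSGstreeds}. Starting from such an instance $G(K,I)$ together with its associated imaginary comb $T$ on $I$, I would apply the very same gadget from Figure \ref{fig:splitredn}: attach a path $P_2 : x - y$ with $x$ joined to $y$ and to every vertex of $G$. This is a polynomial-time transformation producing the split graph $G^* = K^* \cup I^*$ with $K^* = K \cup \{x\}$ and $I^* = I \cup \{y\}$, exactly as in the preceding reductions.

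The crux is to certify that $G^*$ inherits comb-convexity on $I^*$, which I would isolate as a claim mirroring the earlier ones. First I would build the imaginary comb $T'$ on $I^*$ by taking $T$ and appending $y$ as a single pendant vertex extending the spine, as depicted in Figure \ref{fig:CconvsplitI}, so that $T'$ is again a comb. Then I would verify the convexity condition vertex by vertex over $K^*$. For the new vertex $x$ we have $N_G^{I^*}(x) = I^*$, which is the whole of $T'$ and hence connected. For every original clique vertex $v \in K$, adjacency to $y$ is never introduced, so $N_G^{I^*}(v) = N_G^{I}(v)$; since this set induced a subtree of $T$, it continues to induce a subtree of $T'$, because appending the leaf $y$ cannot disconnect a subset that avoids $y$. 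Thus every clique vertex has a connected $I^*$-neighbourhood in $T'$, establishing that $G^*$ is comb-convex split with convexity on $I^*$.

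With the structural claim in hand, correctness of the reduction is immediate from Claim \ref{cl:split}, which gives DD$(G,k)$ if and only if SDD$(G^*, k+1)$. This equivalence is purely about dominating sets and their swaps, independent of the convexity flavour, so it transfers verbatim. Combining the polynomial-time construction, the comb-convexity of $G^*$, and this equivalence yields NP-hardness, and together with membership in NP we obtain NP-completeness of SDD on comb-convex split graphs with convexity on $I$.

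The step I expect to require the most care is the preservation of the comb structure. Unlike a star, where one may freely attach $y$ to the centre and trivially remain a star (the situation in Theorem \ref{th:2}), a comb has a rigid spine-and-teeth shape, so I must argue explicitly that appending $y$ along the spine keeps $T'$ a comb while simultaneously keeping all the relevant neighbourhoods connected. The saving grace is that $y$ is adjacent only to $x$ within the clique, so it participates in exactly one neighbourhood, namely $N_G^{I^*}(x) = I^*$. This is precisely why placing $y$ as a single pendant on the spine suffices, and why no other clique vertex's neighbourhood can be broken.
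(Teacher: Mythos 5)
Your proposal matches the paper's proof essentially step for step: the same reduction from $DD$ on comb-convex split graphs with convexity on $I$ via the $P_2$ gadget of Figure \ref{fig:splitredn}, the same claim that the comb $C'$ on $I^*$ is obtained by appending $y$ to the original comb (the paper notes $y$ may be added as a path vertex or as a tooth, matching your pendant-on-the-spine choice), and the same appeal to Claim \ref{cl:split} for correctness. Your explicit vertex-by-vertex check that the $I^*$-neighbourhoods stay connected is slightly more detailed than the paper's, but the argument is the same.
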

\begin{proof}
$SDD~ is ~in ~NP-Hard:$ Let $G(K,I)$ be a split graph. It is known that $DD$ in comb-convex split graphs with convexity on $I$ is NP-complete \cite{NSGstreeds}. $DD$ in comb convex split graphs with convexity on $I$ can be reduced in polynomial-time to $SDD$ on comb-convex split graphs with convexity on $I^*$ using the construction in Figure\ref{fig:splitredn}. Let $G$ be a comb convex split graph with convexity on $I$. Let $C$ be the imaginary comb associated with vertices of $I$. Now we show that $G^*$ is a comb-convex split graph with convexity on $I^*$ by defining an imaginary comb $C'$ on $I^*$.
\begin{claim}
$G^*$ is a comb-convex split graph with convexity on $I^*$.
\end{claim}
\begin{proof}
We show that $G^*$ is a comb-convex split graph by defining an imaginary comb $C'$ on $I^*$. For each $w \in K^*$ its neighbourhood $N_G^{I^*}(w)$ remains the same except for $x \in K^*$. Note that $x$ is adjacent to all vertices of $I^*$. We may retain the same comb $C$  associated with $I$ and add a vertex $\{y \}$ to $C$. Note that the vertex $y$ may be added to $C$ either as a $path~vertex$ or as a $teeth$, we see that comb-convexity is preserved. So, there exist an associated comb $C' = C \cup \{ y \}$ on vertices of $I^*$ say $i_1, i_2, \dots i_n, y$ as shown in Figure \ref{fig:CconvsplitI}.Thus $G^*$ is also comb-convex split graph with convexity on $I^*$. 
 \end{proof}
Also by Claim \ref{cl:split} we conclude $SDD$ is NP-Hard on the comb-convex split graphs with convexity on $I$. Therefore, $SDD$ is NP-complete on comb-convex split graphs with convexity on $I$.
\end{proof}
\begin{corollary}
$SDD$ is NP-complete for tree-convex split graphs
\end{corollary}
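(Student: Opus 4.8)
The plan is to deduce the corollary directly from the theorems just established, exploiting the fact that the convexity structures already used are themselves trees. First I would recall the natural definition of a tree-convex split graph: a split graph $G(K,I)$ is tree-convex with convexity on $K$ (respectively on $I$) if there is an associated imaginary tree on $K$ (respectively on $I$) such that, for every vertex of the opposite part, the relevant neighbourhood induces a connected subtree. The crucial observation is that a star is a tree and a comb is a tree, so any star-convex or comb-convex split graph is, with the very same associated imaginary structure, a tree-convex split graph. Hence the classes of star-convex split graphs (with convexity on $K$ or on $I$) and of comb-convex split graphs form subclasses of the class of tree-convex split graphs.

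Membership in NP is immediate and I would dispose of it first: the $O(n^3)$ certificate-verification argument given in the proof of Theorem \ref{th:1} holds for arbitrary graphs, and in particular for tree-convex split graphs, so $SDD$ lies in NP on this class. For NP-hardness I would then simply invoke the class inclusion from the previous step. By Theorem \ref{th:1}, $SDD$ is NP-complete on star-convex split graphs with convexity on $K$; since every such graph is also a tree-convex split graph, the instances $G^*$ produced by the Construction of Figure \ref{fig:splitredn} are valid tree-convex split graph instances. Therefore the same polynomial-time reduction, together with Claim \ref{cl:split}, already witnesses that $SDD$ is NP-hard on tree-convex split graphs. Combining NP-membership with NP-hardness yields NP-completeness.

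I do not expect any genuine obstacle here, since NP-hardness transfers upward from a subclass to its superclass automatically. The only point deserving care is phrasing the tree-convex definition so that the associated imaginary structure is explicitly permitted to be an arbitrary tree; once that is fixed, the containment of the star- and comb-convex classes inside the tree-convex class is transparent, and the corollary follows with no further work.
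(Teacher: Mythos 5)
Your proposal is correct and follows essentially the same route as the paper: the paper's proof is exactly the observation that star-convex and comb-convex split graphs are subclasses of tree-convex split graphs (since stars and combs are trees), so NP-hardness transfers upward from Theorems \ref{th:1}, \ref{th:2}, \ref{th:3}, with NP-membership carried over from the general certificate-verification argument.
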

\begin{proof}
Since star-convex and comb-convex split graphs are a subclass of tree-convex split graphs, this result follows from Theorems \ref{th:1},\ref{th:2},\ref{th:3}.
\end{proof}
\section{Secure Domination in Bisplit graphs:}\label{sec:bs_sd}
\begin{defn}
	An undirected graph $G$ $=$ $(V,E)$ is a $bisplit$ $graph$ if its vertex set $V$ can be partitioned into three stable sets $X$, $Y$ and $Z$ such that $Y \cup Z$ induces a complete bipartite subgraph (a bi-clique) in $G$.
	\end{defn}
	\begin{prop}\label{p:epn_sd}\cite{cockayne} If $D$ is a secure dominating set of a graph $G$, then for every vertex $v \in D$, the subgraph induced by $epn(v, D)$ is complete.
	\end{prop}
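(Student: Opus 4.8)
The plan is to exploit the secure domination condition directly on the external private neighbours of a fixed vertex $v \in D$. Recall that $epn(v,D)$ denotes the set of external private neighbours of $v$ with respect to $D$, that is, those vertices $u \in V(G) \setminus D$ whose only neighbour in $D$ is $v$. I would fix an arbitrary $v \in D$ and two distinct vertices $u_1, u_2 \in epn(v,D)$, and show that they must be adjacent. Since $u_1$ and $u_2$ are arbitrary, this establishes that the subgraph induced by $epn(v,D)$ is complete.

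First I would apply the secure domination property to $u_1$. Because $D$ is a secure dominating set, there exists some $w \in D$ adjacent to $u_1$ for which $(D \setminus \{w\}) \cup \{u_1\}$ is again a dominating set of $G$. However, $u_1 \in epn(v,D)$ means that $v$ is the \emph{unique} neighbour of $u_1$ in $D$, which forces $w = v$. Consequently $(D \setminus \{v\}) \cup \{u_1\}$ is a dominating set of $G$.

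Next I would test this new dominating set against $u_2$. As $u_2 \in epn(v,D)$, its only neighbour in $D$ is $v$, so $u_2$ has no neighbour in $D \setminus \{v\}$; moreover $u_2 \notin D$ and $u_2 \neq u_1$, so $u_2$ does not itself belong to $(D \setminus \{v\}) \cup \{u_1\}$. For $u_2$ to be dominated by $(D \setminus \{v\}) \cup \{u_1\}$, it must therefore be adjacent to $u_1$. This yields the required edge between $u_1$ and $u_2$.

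The argument is short, and rather than a genuine obstacle the only delicate point is the observation that the replacement vertex $w$ supplied by the security condition is forced to equal $v$, precisely because $v$ is the sole neighbour in $D$ of any external private neighbour. Once that is noted, applying the domination requirement to a second private neighbour immediately produces adjacency, and completeness of $epn(v,D)$ follows.
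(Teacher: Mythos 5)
Your argument is correct and is the standard proof of this fact from the cited reference; the paper itself states the proposition with a citation to Cockayne et al.\ and gives no proof of its own. The key observation that the security condition forces the swap vertex $w$ to equal $v$ for any external private neighbour, and that a second private neighbour must then be dominated by $u_1$ in the swapped set, is exactly the intended reasoning.
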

\begin{theorem}\label{th:bs_sd}
$SDD$ is NP-complete for bisplit graphs.
\end{theorem}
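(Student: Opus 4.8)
The plan is to establish membership in NP exactly as in Theorem~\ref{th:1} (a candidate set is checked to be dominating and secure in $O(n^3)$ time), and then to prove NP-hardness by a reduction from the domination problem \textit{DD}, which is NP-complete on bipartite graphs, engineered so that its output is a bisplit graph. The construction imitates the split-graph reduction of Figure~\ref{fig:splitredn}, but one ingredient must change: a split graph has a universal vertex $x$ that makes every swap-in safe and, with a pendant, forces $\gamma_s=\gamma+1$, whereas a bisplit graph has no universal vertex since $X$, $Y$, $Z$ are all stable. I would replace $x$ by a \emph{universal pair} $y_0\in Y$, $z_0\in Z$ whose closed neighbourhoods together cover $V$.

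Given a bipartite instance $G=(A\cup B,E)$ with parameter $k$, I would set $X=A\cup\{p_y,p_z\}$, $Y=B\cup\{y_0\}$, $Z=\{z_0\}$, keep the edges of $G$ (which run between $X$ and $Y$), join $y_0$ to every vertex of $A$, make $Y\cup Z$ a biclique (so $z_0$ is adjacent to all of $B\cup\{y_0\}$), and attach pendants $p_y,p_z\in X$ with $p_y$ adjacent only to $y_0$ and $p_z$ adjacent only to $z_0$. Then $X$, $Y$, $Z$ are stable, $Y\cup Z$ is a biclique, and $\{y_0,z_0\}$ dominates $V$; so $H$ is a bisplit graph built in polynomial time, and the target equality is $\gamma_s(H)=\gamma(G)+2$, i.e. \textit{DD}$(G,k)\Leftrightarrow$\textit{SDD}$(H,k+2)$.

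For the forward direction, given a dominating set $D$ of $G$, I would verify that $D\cup\{y_0,z_0\}$ is secure: every non-selected vertex of $A\cup B$ admits a safe swap with a neighbour lying in $D$ (one exists because $D$ dominates $G$), the pair $\{y_0,z_0\}$ staying in place so the new set still dominates; $p_y$ swaps with $y_0$ and $p_z$ with $z_0$, the surviving partner recovering the rest. For the backward direction, from a secure dominating set $S$ with $|S|\le k+2$ I would first force $\{y_0,z_0\}\subseteq S$ (via a replacement argument using $N[p_y]\subseteq N[y_0]$, $N[p_z]\subseteq N[z_0]$ and Proposition~\ref{p:epn_sd}), then set $D=S\setminus\{y_0,z_0\}$ and argue that $D\cap(A\cup B)$ dominates $G$ and has size at most $k$.

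The main obstacle is the backward direction. Since there is no universal vertex, the safety of every swap rests on $\{y_0,z_0\}$ never being swapped out, and this must be enforced purely by the pendants while the three-part partition is maintained. Two points need care. First, the replacement arguments that pin $y_0,z_0$ into $S$ must be handled in the cases where $p_y$ or $p_z$ is itself selected, where Proposition~\ref{p:epn_sd} and the minimality of $S$ are the tools. Second, and most importantly, one must rule out a cheaper secure dominating set that exploits the biclique edges---note that $z_0$ already dominates all of $B$ \emph{for free}. The resolution is that it is security, not domination, that is binding: with $p_y,p_z\notin S$, swapping out $y_0$ or $z_0$ is never safe, so every non-selected vertex of $A\cup B$ is compelled to possess a neighbour of $D$ inside $A\cup B$, forcing $D\cap(A\cup B)$ to dominate $A\cup B$ through the original $G$-edges.
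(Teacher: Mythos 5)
Your reduction is correct, but it starts from a different source problem than the paper does: the paper reduces from \emph{domination in bisplit graphs} (citing its NP-completeness separately) and attaches a path $x\text{--}y\text{--}z\text{--}x'$ whose two internal vertices jointly cover $V(G)$, whereas you reduce from \emph{domination in bipartite graphs} and manufacture the bisplit partition yourself, with the biclique degenerating to the star centred at $z_0$. The gadget is in essence the same in both proofs --- a $P_4$ whose two middle vertices together dominate everything and whose two pendant endpoints force those middle vertices to stay put, giving $\gamma_s=\gamma+2$ --- and both backward directions hinge on the same two facts: a pendant outside the solution makes the swap at its support unsafe unless the rest of the graph is already dominated without the support, and Proposition~\ref{p:epn_sd} caps the damage at one vertex per support when the pendant \emph{is} selected (since $epn(y_0,S)\subseteq A$ and $A$ is stable). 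What your choice buys is self-containment: you only need the classical NP-completeness of domination on bipartite graphs rather than on bisplit graphs. What it costs is that you must verify the bisplit partition of $H$ from scratch (which you do correctly) and you must confront the ``$z_0$ dominates $B$ for free'' objection head-on, which you resolve correctly via the pendant $p_z$. Two small repairs to your write-up: first, the plan as literally stated (``$D\cap(A\cup B)$ dominates $G$'' after pinning $y_0,z_0$ into $S$) fails exactly when $p_y$ or $p_z$ is also selected, because then one vertex of $epn(y_0,S)$ (resp.\ $epn(z_0,S)$) may be dominated only by $y_0$ (resp.\ $z_0$); you must add that vertex to $D$, and the size budget $|S\cap\{p_y,y_0,p_z,z_0\}|\ge 3$ in that case pays for it --- you gesture at this but should make the accounting explicit, as the paper does in its Cases 1.1--1.2 of Claim~\ref{cl:bs_npc_sd}. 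Second, ``minimality of $S$'' is neither available (you are given only $|S|\le k+2$) nor needed; the tool is Proposition~\ref{p:epn_sd} together with the stability of $A$ and $B$.
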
	
\begin{proof}
It is known that $DD$ in bisplit graphs is NP-complete \cite{MP_bisplit} and this can be reduced to $SDD$ in bisplit graphs using the following reduction.
	\\ \\ $ \textbf {\textit{Construction:}} $  Let $G=X \cup Y \cup Z$ be a bisplit graph whose vertex set is partitioned into three stable sets $X,~Y$ and $Z$ such that $Y \cup Z$ induces a complete bipartite graph. Let $G^*$ be obtained from $G$ as follows. We add a path $P_4 : x - y -z - x' $ such that, $x$ is adjacent to $y$; $y$ is adjacent to $x,~z$ and all vertices of $X$ and $Z$; $z$ is adjacent to $y,~x'$ and all vertices of $Y$ and $x'$ is adjacent to $z$. We note that $|V(G^*_)|=|V(G)|+4$ and $|E(G^*)|=|E(G)|+|X|+|Y|+|Z|+3 =|E(G)|+|V(G)|+3$. Thus $G^*$ can be constructed from $G$ in polynomial-time. For an illustration of this construction, we refer to Figure \ref{fig:bs_sd}.
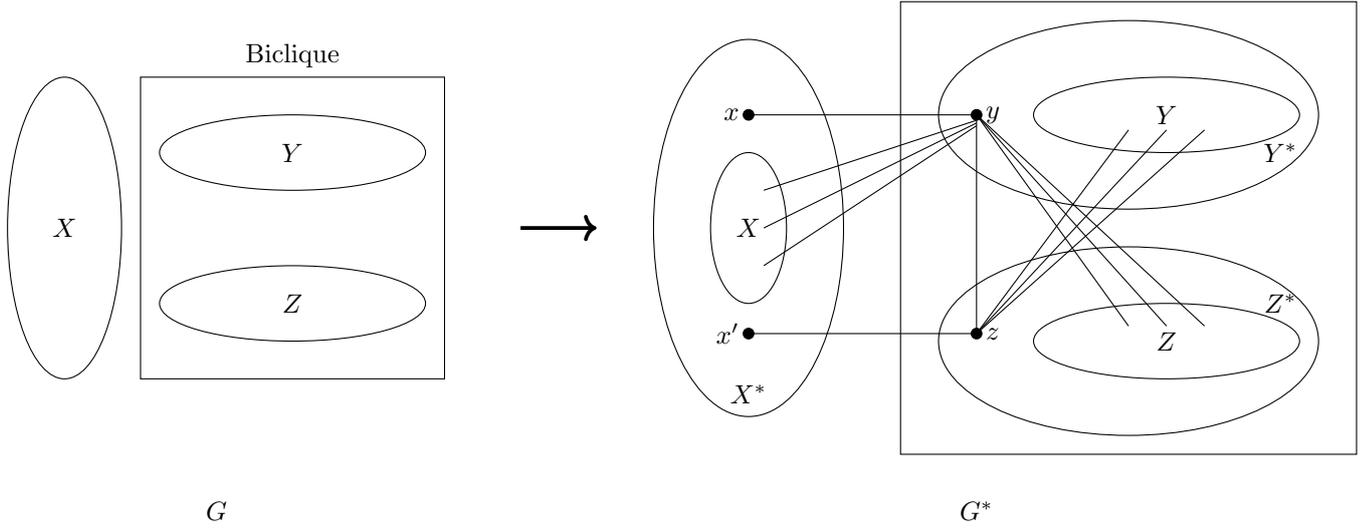
\begin{figure}[htbp]
\begin{center}
\begin{tikzpicture}
    % The set X
    \node[draw, ellipse, minimum width=1.5cm, minimum height=4cm] (X) at (0,0) {$X$};
    
    % The sets Y and Z
    \node[draw, ellipse, minimum width=3.5cm, minimum height=1cm] (Y) at (3,1) {$Y$};
    \node[draw, ellipse, minimum width=3.5cm, minimum height=1cm] (Z) at (3,-1) {$Z$};
    
    % The enclosing rectangle for the biclique
    \node[draw, rectangle, minimum width=4cm, minimum height=4cm, label=above:Biclique] (Biclique) at (3,0) {};
    
    % The set G
    \node[below=of Biclique, xshift=-1cm, yshift=-0.5cm] (G) {$G$};
 \draw [->, line width=0.5mm] (6,0) -- (7,0);
    % The set X* with x' marked
        \node[draw, ellipse, minimum width=2.5cm, minimum height=5cm] (Xstar) at (9,0) {};
    \node[draw, ellipse, minimum width=1cm, minimum height=2cm] (X) at (9,0) {$X$};
    \node at (9,-2.2) {$X^*$};
    \filldraw[black] (9,1.5) circle (2pt) node[anchor=east](x) {$x$};
    \filldraw[black] (9,-1.4) circle (2pt) node[anchor=east](x') {$x'$};
       
    % The sets Y*, Z* with labels
    \node[draw, ellipse, minimum width=5cm, minimum height=2.5cm] (Ystar) at (14,1.5) {};
    \node[draw, ellipse, minimum width=3.5cm, minimum height=1cm] (Y) at (14.5,1.5) {$Y$};
    \node at (16,1) {$Y^*$};
    \node[draw, ellipse, minimum width=5cm, minimum height=2.5cm] (Zstar) at (14,-1.5) {};
     \node[draw, ellipse, minimum width=3.5cm, minimum height=1cm] (Z) at (14.5,-1.5) {$Z$};
    \node at (16,-1) {$Z^*$};
     \filldraw[black] (12,1.5) circle (2pt) node[anchor= west](y) {$y$};
    \filldraw[black] (12,-1.4) circle (2pt) node[anchor= west](z) {$z$};
    
    % The enclosing rectangle for G*
    \node[draw, rectangle, minimum width=6cm, minimum height=6cm, ] (Box) at (14,0) {};
      \node[below=of Box, xshift=-2cm, yshift=0.5cm] (G*) {$G^*$};
    
    % Connections
    \draw (X)(9.2,0) edge (y)(12,1.5);
  \draw (X)(9.2,0.5) edge (y)(12,1.5); 
  \draw (X)(9.2,-0.5) edge (y)(12,1.5);
    \draw[] (x) -- (y);
\draw[] (y)(12,1.5) -- (12,-1.4);
\draw[] (y)(12,1.5) -- (14.5,-1.3);
\draw[] (y)(12,1.5) -- (14,-1.3);
\draw[] (y)(12,1.5) -- (15,-1.3);
\draw[] (z)(12,-1.4) -- (x');
\draw[] (z)(12,-1.4) -- (14.5,1.3);
\draw[] (z)(12,-1.4) -- (14,1.3);
\draw[] (z)(12,-1.4) -- (15,1.3);
\end{tikzpicture}
\end{center}
\caption{$DD$ in bisplit $\leq_p $ $SDD$ in bisplit}
 \label{fig:bs_sd}
\end{figure}
\begin{claim}\label{cl:bs_sd}
$G^*$ is a bisplit graph.
\end{claim}
\begin{proof}
Let $G^*=X^*  \cup Y^*  \cup Z^*$ where $X \subseteq X^*$, $Y \subseteq Y^*$ and $Z \subseteq Z^*$. We show that $G^*$ is a bisplit graph where $X^*, Y^*$ and $ Z^*$ are three stable sets and $Y^* \cup Z^*$ forms a biclique. By our construction, $y$ is adjacent to all vertices of $X$ and $Z$. So $y \in Y^*$. We know that $Y \cup Z$ of $G$ forms a biclique.   $x$ and $x'$ are neither adjacent to all vertices of $Y$ nor adjacent to all vertices of $Z$. These two vertices cannot be a part of biclique. $\{x, x'\} \in X^*$. Note that $z$ is adjacent to $y \in Y^*$, $x' \in X^*$ and all vertices of $Y$, which imply $z \in Z^*$.  Therefore, the graph $G^*=X^*  \cup Y^*  \cup Z^*$ where $X^*=X \cup \{x,x'\}$, $Y^*=Y \cup \{y\}$ and $Z^*= Z \cup \{z\}$. Note that $X^*, Y^*$ and $ Z^*$ are three stable sets and $Y^* \cup Z^*$ forms a biclique. Hence $G^*$ is a bisplit graph.
\end{proof}
\begin{claim}\label{cl:bs_npc_sd}
 $G$ has a dominating set $D$ with $ |D| \leq k $ if and only if $G^*$ has a secure dominating set $D^*$ with $|D^*| \leq k^* =k+2$.
\end{claim}
\begin{proof}
$\textbf {\textit{Yes instance of DD in G maps to Yes instance of SDD in G*:}} $
\\Let $G$ have 	a domination set $D$ of size atmost $k$. We shall prove that $G^*$ has a secure dominating set $D^*$ with $|D^*| \leq k^* =k+2$.\\
Consider $D^*=D \cup \{ y,z \}$. Clearly $D^*$ is a dominating set as $\{ y,z \}$ dominates all vertices of $G^*$ (by construction).
 Now we need to show $D^*$ is a secure dominating set of $G^*$. That is, we will prove for each vertex $v\in V(G^*)\setminus D^*$, there exists a neighbour $v^*\in D^*$ such that the swap set $(D^*\setminus\{ v^* \})\cup \{ v \} $ is again a dominating set of $G^*$
 \begin{itemize}
	\item When $v\in V(G)\setminus D$.\\
 Since $D$ is a subset of $D^*$ and $D$ is a dominating set of G, for every vertex $v\in V(G)\setminus D$, there exists a neighbour $v'\in D\subseteq D^*$. Therefore, for each $v\in V(G)\setminus D$, there exists a neighbour $v'\in D^*$ such that $(D^*\setminus\{ v' \})\cup \{ v \} $ is a dominating set of $G^*$. \\This is true as  $\{ y,z \}$ is a subset of the above set and it dominates all vertices of $G^*$.
 \item When $v=x$ there exists a neighbour $y\in D^*$ such that $(D^*\setminus\{ y \})\cup \{ x \} $ is a dominating set of $G^*$. \\This is true as  $D$, a subset of the above set dominates all vertices of $G$ and $\{ x,z \}$ in the above set dominates the newly introduced vertices.
 \item When $v=x'$ there exists a neighbour $z\in D^*$ such that $(D^*\setminus\{ z \})\cup \{ x' \} $ is a dominating set of $G^*$. \\This is true as  $D$, a subset of the above set dominates all vertices of $G$ and $\{ x',y \}$ in the above set dominates the newly introduced vertices.
 \end{itemize}
 Therefore, $D^*=D \cup \{ y,z \}$ is a secure dominating set of $G^*$ with $|D^*| \leq k^* =k+2$.\\
 
\noindent $\textbf {\textit{Yes instance of Secure domination in G* maps to Yes instance of  Domination in G:}} $
\\ Let $G^*$ have a secure dominating set $D^*$ with $|D^*|\leq k^*=k+2$. We consider the following cases depending on the value of $|D^* \cap \{ y,z\}|$ and we shall prove that $G$ has a dominating set $D$ with $|D|\leq k$ in each of these cases.\\

\noindent Consider $D'=D^* \cap V(G)$.  If $D'$ dominates $G$, then $D = D'$ and we are done. So let us assume that $D'$ is not a dominating set of $G$. Let $W$ be the non empty set of vertices of $G$ having no neighbour in $D'$. Let $W_x=W\cap X$, $W_y=W\cap Y$ and $W_z=W\cap Z$. Since $W$ is nonempty, at least one of the sets $W_x$, $W_y$ or $W_z$ is nonempty.\\
 \Case $|D^* \cap \{ y,z\}|=2$\\
\Subcase $|D^* \cap \{ x,x'\}|=2$. Since $D'$ doesnot dominate the vertices of $W \subseteq V(G)\setminus D^*$, each vertex of $W_x \cup W_z$ are dominated by $y \in G^*$ and each vertex of $W_y$ is dominated by $z \in G^*$. By Proposition \ref{p:epn_sd}, if $W_x \cup W_z \neq \phi$ then, $w \in W_x \cup W_z$ is sufficient to dominate it and if $W_y \neq \phi$ then $|W_y |=1$. Therefore $D=D' \cup \{w\} \cup W_y$ will be the dominating set of $G$ and we see that $|D|=|D^*|-|D^* \cap \{ y,z\}|-|D^* \cap \{ x,x'\}|+2 \leq k^*-2-2+2=k$.\\
\Subcase $|D^* \cap \{ x,x'\}|=1$. Here $|D'|=|D^*|-|D^* \cap \{ y,z\}|-|D^* \cap \{ x,x'\}| \leq k^*-2-1= k-1$. By our assumption since $D'$ doesnot dominate the vertices of $W \subseteq V(G)\setminus D^*$, we construct $D$ as follows:  
 \begin{itemize}
	\item If $x \in D^*$ by our case $x' \notin D^*$. Since $D^*$ is a secure domination set of $G^*$, $(D^*\setminus\{ z \})\cup \{ x' \} $ is a dominating set of $G^*$. Which imply all vertices of $Y$ are dominated by $D' \subset D^*$ and $W=W_x \cup W_z \neq \phi$. By Proposition \ref{p:epn_sd}, $epn(y, D^*)$ is complete, so $w \in W_x \cup W_z$ is sufficient to dominate it. Therefore $D=D' \cup \{w\}$ will be the dominating set of $G$ and $|D|=|D'|+1 \leq k$.
	 \item If $x' \in D^*$ by our case $x \notin D^*$. Since $D^*$ is a secure domination set of $G^*$, $(D^*\setminus\{ y \})\cup \{ x \} $ is a dominating set of $G^*$. Which imply all vertices of $X$ and $Z$ are dominated by $D' \subset D^*$ and $W=W_y \neq \phi$. By Proposition \ref{p:epn_sd}, $epn(z, D^*)$ is complete, so $|W_y |=1$. Therefore $D=D' \cup W_y$ will be the dominating set of $G$ and $|D|=|D'|+1 \leq k$.
\end{itemize}
\Subcase $|D^* \cap \{ x,x'\}|=0$. Since $x \notin D^*$ and $D^*$ is a secure domination set of $G^*$, $(D^*\setminus\{ y \})\cup \{ x \} $ is a dominating set of $G^*$. Which imply all vertices of $X$ and $Z$ are dominated by $D' \subset D^*$. Additionally, $x' \notin D^*$ and since $D^*$ is a secure domination set of $G^*$, $(D^*\setminus\{ z \})\cup \{ x' \} $ is a dominating set of $G^*$. Which imply all vertices of $Y$ are dominated by $D' \subset D^*$. Therefore, our assumption $D'$ is not a dominating set is wrong. Hence $D=D'$ is a dominating set of $G$ and $|D|=|D'| =|D^*|-|D^* \cap \{ y,z\}|-|D^* \cap \{ x,x'\}| \leq k^*-2+0=k$.\\
\Case $|D^* \cap \{ y,z\}|=1$\\
\Subcase $y \in D^*$. By our case $z \notin D^*$ which imply $x' \in D^*$ securely dominates $\{ x', z \}$ and all vertices of $Y$ are dominated by $D' \subset D^*$. So, $W = W_x \cup W_z$ and we construct $D$ as follows:  
 \begin{itemize}
	\item If $x \notin D^*$, since $D^*$ is a secure domination set of $G^*$, $(D^*\setminus\{ y \})\cup \{ x \} $ is a dominating set of $G^*$. Which imply all vertices of $X$ and $Z$ are dominated by $D' \subset D^*$. Therefore, our assumption $D'$ is not a dominating set is wrong. Hence $D=D'$ is a dominating set of $G$ and $|D|=|D'| =|D^*|-|D^* \cap \{ y,z\}|-|D^* \cap \{ x,x'\}| \leq k^*-1-1=k$.
	 \item If $ x \in D^*$ by our case $y \in D^*$, we conclude $epn(y, D^*)=W$. By Proposition \ref{p:epn_sd}, $epn(y, D^*)$ is complete, so $w \in W$ is sufficient to dominate it. Therefore $D=D' \cup \{w\}$ will be the dominating set of $G$ and $|D|=|D'|+1=|D^*|-|D^* \cap \{ y,z\}|-|D^* \cap \{ x,x'\}|+1 \leq k^*-1-2+1=k$.
\end{itemize}
\Subcase $z \in D^*$. By our case $y \notin D^*$ which imply $x \in D^*$ securely dominates $\{ x, y \}$ and all vertices of $X$ and $Z$ are dominated by $D' \subset D^*$. So, $W = W_y$ and we construct $D$ as follows:  
 \begin{itemize}
	\item If $x' \notin D^*$, since $D^*$ is a secure domination set of $G^*$, $(D^*\setminus\{ z \})\cup \{ x' \}$ is a dominating set of $G^*$. Which imply all vertices of $Y$ are dominated by $D' \subset D^*$. Therefore, our assumption $D'$ is not a dominating set is wrong. Hence $D=D'$ is a dominating set of $G$ and $|D|=|D'| =|D^*|-|D^* \cap \{ y,z\}|-|D^* \cap \{ x,x'\}| \leq k^*-1-1=k$.
	 \item If $ x' \in D^*$ by our case $z \in D^*$, we conclude $epn(z, D^*)=W$. By Proposition \ref{p:epn_sd}, $epn(z, D^*)$ is complete, so $|W_y|=1$. Therefore $D=D' \cup W_y$ will be the dominating set of $G$ and $|D|=|D'|+1=|D^*|-|D^* \cap \{ y,z\}|-|D^* \cap \{ x,x'\}|+1 \leq k^*-1-2+1=k$.
\end{itemize}
\Case $|D^* \cap \{ y,z\}|=0$\\
 $\Rightarrow$ $|D^* \cap \{ x,x'\}|=2$. That is $x$ and $x'$ securely dominates $\{y,x\}$ and $\{z,x'\}$ respectively. Which imply all vertices of $G$ are dominated by $D' \subset D^*$. Therefore, our assumption $D'$ is not a dominating set is wrong. Hence $D=D'$ is a dominating set of $G$ and $|D|=|D'| =|D^*|-|D^* \cap \{ y,z\}|-|D^* \cap \{ x,x'\}| \leq k^*-0-2=k$.\\
\\Hence in all the above cases $G$ has a dominating set $D$ with $|D|\leq k$.
\end{proof}
\noindent Thus from Claims \ref{cl:bs_sd} and \ref{cl:bs_npc_sd}, Theorem \ref{th:bs_sd} follows.
\end{proof}
\noindent A deeper inspection of Theorem \ref{th:bs_sd} reveals that the computational hardness of $SDD$ on bisplit graphs stems from the presence of cycles in the input graphs. This observation naturally motivates the following questions. First, what is the computational complexity of $SDD$ on bisplit graphs when the length of induced cycles is bounded—for instance, on chordal bisplit, chordal bipartite bisplit, and other special bisplit graphs? Second, can the maximum cycle length serve as a decisive structural parameter that dictates the polynomial-time versus NP-complete dichotomy of $SDD$ on bisplit graphs?
\subsection{Chordal Bisplit graphs}\label{ssec:cbs}
\begin{theorem}\cite{MP_bisplit}A graph $G=X \cup Y \cup Z$ is a $chordal~bisplit~graph$, if and only if the following properties are satisfied.\\
1. The biclique $Y \cup Z$ in $G$ is $K_{1,l}$ for some $l > 0$.\\
2. For each vertex $x \in X$, if $d_G (x) \geq 2$, then $x$ is adjacent to $y_1$.\\
3. The graph induced on $Z \cup X$ is a forest.
\end{theorem}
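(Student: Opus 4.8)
The plan is to prove both directions of the equivalence, viewing $G = X \cup Y \cup Z$ as a bisplit graph (three stable sets with $Y \cup Z$ a biclique) and recalling that chordality means the absence of any induced cycle of length at least $4$. Throughout I will use that, by property~1, the biclique is a star $K_{1,l}$ whose centre is the single vertex $y_1$ (so $Y = \{y_1\}$) and whose leaves form $Z$; in particular $y_1$ is adjacent to every vertex of $Z$. This identification of $y_1$ as the unique vertex of $Y$ is what makes the case analysis tractable.

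For necessity, I would assume $G$ is chordal and derive each property by exhibiting a forbidden induced cycle whenever it fails. If the biclique were $K_{m,n}$ with $m,n \geq 2$, then two vertices from each side would induce a $C_4$, so the biclique must be a star, giving property~1. If some $x \in X$ with $d_G(x) \geq 2$ missed $y_1$, then since $X$ is stable and $Y = \{y_1\}$, all neighbours of $x$ lie in $Z$; choosing two of them $z, z'$, the vertices $x, z, y_1, z'$ induce a $C_4$ (using that $zz'$ and $xy_1$ are non-edges), giving property~2. Finally, since $X$ and $Z$ are stable, $G[X \cup Z]$ is bipartite; if it contained a cycle, a shortest such cycle would be chordless and hence an induced cycle of length $\geq 4$ in $G$, so $G[X \cup Z]$ must be a forest, giving property~3.

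For sufficiency, I would assume the three properties and suppose for contradiction that $G$ has an induced cycle $C$ of length $\geq 4$, splitting on whether $y_1 \in C$. If $y_1 \notin C$, then $C$ lies inside $G[X \cup Z]$, which is acyclic by property~3 --- a contradiction. If $y_1 \in C$, write $C = y_1\,v_1\,v_2\cdots v_k\,y_1$, so the internal vertices $v_2,\dots,v_{k-1}$ are non-adjacent to $y_1$. Since $y_1$ is adjacent to all of $Z$, no internal vertex can lie in $Z$, and since $Y = \{y_1\}$, every internal vertex lies in $X$; stability of $X$ then forbids two consecutive internal vertices, forcing $k = 3$, so $C$ is a $4$-cycle $y_1\,v_1\,v_2\,v_3\,y_1$ with $v_2 \in X$. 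As $v_2$ is adjacent to $v_1$ and $v_3$ and $X$ is stable, both $v_1, v_3 \in Z$, whence $d_G(v_2) \geq 2$; property~2 then forces $v_2$ to be adjacent to $y_1$, producing a chord of $C$ and contradicting that $C$ is induced.

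I expect the sufficiency direction to be the main obstacle, and within it the reduction of an arbitrary induced cycle through $y_1$ down to a single $4$-cycle: this requires combining the star structure (property~1), the stability of $X$, and property~2 in exactly the right order, so that each constraint eliminates one possibility for the cycle vertices. The necessity direction should be comparatively routine, amounting to spotting the three forbidden configurations, with the only mild care being the standard fact that a shortest cycle in any graph is chordless, which is what lets property~3 interact with chordality.
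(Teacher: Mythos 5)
The paper does not prove this statement: it is imported verbatim from the cited reference \cite{MP\_bisplit} and used as a black box in Section 3.1, so there is no in-paper proof to compare your argument against. Judged on its own, your proof is correct and complete under the natural reading of the theorem (a bisplit graph with the given partition $X,Y,Z$, with $Y=\{y_1\}$ the centre of the star biclique, is chordal iff properties 1--3 hold), which is also the convention the paper adopts when it later sets $Y=\{y_1\}$ and $Z=\{z_1,\dots,z_l\}$. Both directions check out: the three forbidden-configuration arguments for necessity are sound (including the standard fact that a shortest cycle is chordless, needed for property 3), and in the sufficiency direction your reduction of any induced cycle through $y_1$ to a $4$-cycle $y_1 v_1 v_2 v_3$ with $v_2\in X$ and $v_1,v_3\in Z$, which property 2 then destroys with the chord $v_2y_1$, is exactly the right chain of eliminations. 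The only point worth making explicit in a final write-up is that $v_1,v_3\notin Y$ because $Y=\{y_1\}$ and they differ from $y_1$, which you use implicitly when concluding $v_1,v_3\in Z$.
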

\noindent \textit{\textbf{Note:}\cite{cockayne}  For complete bipartite graph $K_{p,q}$ where $p \leq q$:} \[
\gamma_s =
\begin{cases}
q, & \text p =1,\\
2, & \text p = 2,\\
3, & \text p =3,\\
4, & \text p \geq 4.
\end{cases}
\]
	\textbf{The minimum secure domination problem is polynomial-time solvable on chordal bisplit graphs:}\\
Let the vertices of the bisplit graph $G=X \cup Y \cup Z$ be a labelled as follows:\\
$Y=\{y_1\}$; $Z=\{z_1, z_2, \dots, z_l\}$ where $Y \cup Z$ induces a biclique and $X=\{x_1, x_2, \dots, x_t\}$ forms and independent set. \\ 
\begin{algorithm}[H]
\caption{MSD: Chordal bisplit graphs}
\label{alg:cb-sdom}
\begin{algorithmic}
\Require A connected chordal bisplit graph $G=X \cup Y \cup Z$.
\Ensure A minimum secure dominating set $S$ of $G$.
\State Let $S'=X \cup Y$.
\State Let $Z'$ be the pendent vertices in Z. Let $Z' =\{z_1, z_2, \dots, z_g\}$.
\If{$Z' \neq \phi $.}
    \State  $S=S' \cup (Z' \setminus \{z_i\})$ for some $i; 1 \leq i \leq g$.
    \State \Return $S$.
     \EndIf
   \If { $Z' =  \phi $ and in $G''$, Case 1 or Case 4, or Case 5 are present from following observation as a substructure associated with $y_1$}.
    \State  $S=S' \setminus \{x_p\}$ for some $p; 1 \leq p \leq t$.
    \State \Return $S$
    \Else
     \State{$S=S'$}
      \EndIf
\end{algorithmic}
\end{algorithm}
\noindent \textbf {Observations:}\\
  Consider $G' = G \setminus \{Z' \cup \bigcup\limits_{k=1}^{m} x_k\} ; (1 \leq k \leq m \leq t)$ where $Z' =\{z_1, z_2, \dots, z_g\}$ is the set of  pendant vertices in $Z$ and $x_k$ are pendant vertices in $X$. Let $G'' = G' \setminus Z''$ where  $Z''$ is set of pendant vertices in $G'$. We observe $G''$  have the following structures associated with the vertex $y_1$.\\
  \begin{figure}[H]
  \begin{center}
\begin{tikzpicture}[scale=.8, line width=1pt]
\tikzset{
    v/.style={circle,draw,inner sep=2pt}
}

%%%%%%%%%%%%%%%%%%%%%%%%%%%%%%
%%% CASE 1 %%%
%%%%%%%%%%%%%%%%%%%%%%%%%%%%%%
\begin{scope}[shift={(0,0)}]
\node[v,label=left:{$z_i$}]  (A) at (0,0) {};
\node[v,fill=red,label=right:{$x_p$}] (B) at (1,-1.5) {};
\node[v,fill=blue,label=above:{$y_1$}] (C) at (1,1.5) {};
\draw (A)--(B)--(C)--(A);

\node at (1,-2.2) {\textbf{Case 1}};
\end{scope}

%%%%%%%%%%%%%%%%%%%%%%%%%%%%%%
%%% CASE 2 %%%
%%%%%%%%%%%%%%%%%%%%%%%%%%%%%%
\begin{scope}[shift={(6,0)}]
\node[v,label=left:{$z_i$}]  (D) at (0,0) {};
\node[v,label=right:{$z_j$}] (E) at (2,0) {};
\node[v,fill=blue,label=above:{$y_1$}] (F) at (1,1.5) {};
\node[v,fill=blue,label=below:{$x_k$}] (G) at (1,-1.5) {};

\draw (F)--(E)--(G)--(D)--(F)--(G);

\node at (1,-2.2) {\textbf{Case 2}};
\end{scope}

%%%%%%%%%%%%%%%%%%%%%%%%%%%%%%
%%% NEW CASE 3  %%%
%%%%%%%%%%%%%%%%%%%%%%%%%%%%%%
\begin{scope}[shift={(12,0)}]
\node[v,label=left:{$z_i$}]  (H) at (-1.5,0) {};
\node[v,label=right:{$z_j$}] (I) at (0,0) {};
\node[v,label=left:{$z_k$}]  (J) at (2,0) {};
\node[v,label=right:{$z_l$}] (K) at (3.5,0) {};
\node[v,fill=blue,label=above:{$y_1$}] (L) at (1,1.5) {};
\node[v,fill=blue,label=below:{$x_m$}] (M) at (1,-1.5) {};

\draw (M)--(H)--(L)--(I)--(M)--(J)--(L)--(K)--(M)--(L);
\node at (1,-2.2) {\textbf{Case 3}};
\end{scope}

%%%%%%%%%%%%%%%%%%%%%%%%%%%%%%
%%% NEW CASE 4  %%%
%%%%%%%%%%%%%%%%%%%%%%%%%%%%%%
\begin{scope}[shift={(3,-4.5)}]
\node[v,label=left:{$z_i$}]  (H) at (-2.5,0) {};
\node[v,label=right:{$z_j$}] (I) at (-0.75,0) {};
\node[v,label=left:{$z_k$}]  (J) at (0.75,0) {};
\node[v,label=right:{$z_l$}] (K) at (2.5,0) {};
\node[v,fill=blue,label=above:{$y_1$}] (L) at (0,1.5) {};
\node[v,fill=red,label=below:{$x_p$}] (N) at (-3.1,-1.5) {};
\node[v,fill=blue,label=below:{$x_q$}] (M) at (0,-1.5) {};
\node[v,fill=blue,label=below:{$x_r$}] (O) at (3.1,-1.5) {};

\draw (H)--(L)--(K)--(O)--(J)--(M)--(I)--(N)--(H);
\draw (L)--(N);
\draw (L)--(M);
\draw (L)--(O);
\draw (L)--(I);
\draw (L)--(J);
\node at (0,-2.2) {\textbf{Case 4}};
\end{scope}

%%%%%%%%%%%%%%%%%%%%%%%%%%%%%%
%%% NEW CASE 5  %%%
%%%%%%%%%%%%%%%%%%%%%%%%%%%%%%
\begin{scope}[shift={(11,-4.5)}]
\node[v,label=left:{$z_i$}]  (H) at (-2.5,0.3) {};
\node[v,label=right:{$z_j$}] (I) at (0,-1.3) {};
\node[v,label=left:{$z_k$}]  (J) at (2.5,0) {};
\node[v,label=right:{$z_l$}] (K) at (3.8,0.5) {};
\node[v,fill=blue,label=above:{$y_1$}] (L) at (0,1.5) {};
\node[v,fill=red,label=below:{$x_p$}] (N) at (-2,-2.3) {};
\node[v,fill=blue,label=below:{$x_q$}] (M) at (2,-2.5) {};
\node[v,fill=blue,label=below:{$x_r$}] (O) at (3.3,-2.1) {};

\draw (H)--(L)--(K)--(O)--(I)--(M)--(I)--(N)--(H);
\draw (L)--(N);
\draw (L)--(M);
\draw (L)--(O);
\draw (L)--(I);
\draw (L)--(J);
\draw (M)--(J);
\node at (0,-2.2) {\textbf{Case 5}};
\end{scope}

\end{tikzpicture}
\end{center}
   \caption{Structures associated with the vertex $y_1$}
         \label{fig:cbs_sd}	
\end{figure}
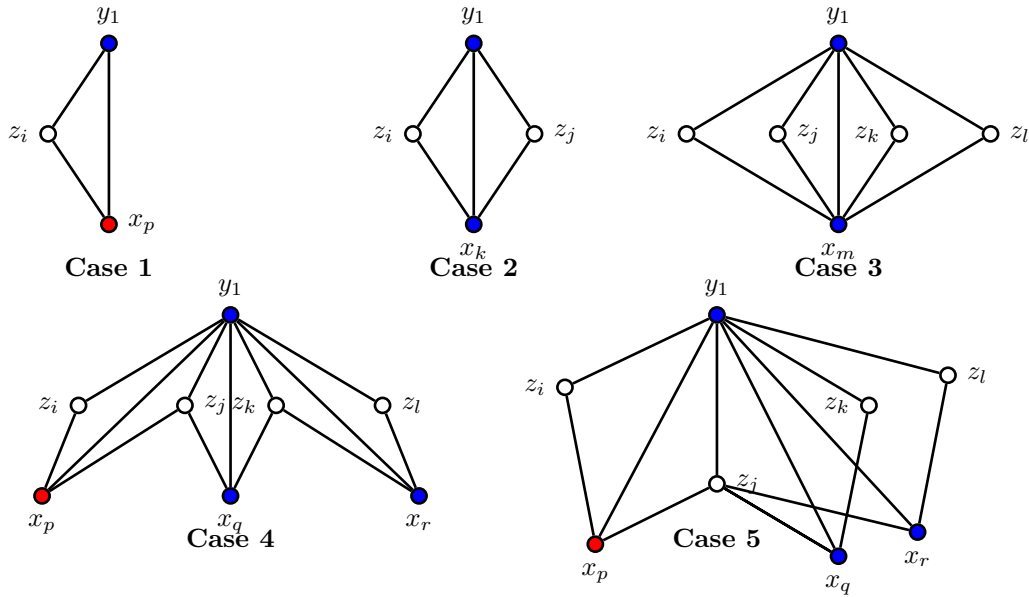
\subsection{Chordal Bipartite Bisplit Graphs}\label{ssec:cbbs}
We observed that $SDD$ was polynomial-time solvable on chordal bisplit graphs. This motivates us to analyse the complexity of $SDD$ on bisplit graphs having a cycle of length four, which are precisely the chordal bipartite bisplit graphs. In this section, we show that $SDD$ on chordal bipartite bisplit graphs  is NP-hard by presenting a polynomial-time reduction. 
\begin{theorem}\label{th:cbbs_sd}
$SDD$ is NP-complete for chordal bipartite bisplit graphs.
\end{theorem}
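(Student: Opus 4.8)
The plan is to follow the same two-part template as Theorem~\ref{th:bs_sd}: first settle membership in NP, then give a polynomial-time reduction. Membership is immediate, since the generic certificate-checking argument of Theorem~\ref{th:1} verifies any candidate secure dominating set in $O(n^3)$ time independently of the graph class, so $SDD$ restricted to chordal bipartite bisplit graphs is in NP. For hardness, note that one cannot simply inherit Theorem~\ref{th:bs_sd}: that reduction outputs a general bisplit graph whose biclique $Y^*\cup Z^*$ and the attached $X^*$-vertices may well induce a $C_6$ or longer, so the instances it produces need not be chordal bipartite. Proving hardness on the \emph{subclass} is a strictly stronger statement and requires a fresh construction whose output is guaranteed to have $C_4$ as its longest induced cycle.

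For the reduction I would again start from the decision domination problem $DD$, which is the source used in Theorem~\ref{th:bs_sd}, but restricted to instances carrying a chordal-bipartite-bisplit-compatible structure (so that the construction is class-preserving); if such a restriction of $DD$ is not directly available, the same scheme can instead be driven from a covering/SAT-type problem, which gives full control over the incidence structure of the output. Given an instance $(G,k)$, I would build $G^*$ by installing a fixed path/biclique gadget playing the role of the $P_4\colon x-y-z-x'$ of Theorem~\ref{th:bs_sd}, whose purpose is to force a constant number of designated vertices into every small secure dominating set and thereby certify the swap condition, and then attaching the vertices of $G$ so that (i) the three-stable-set partition with biclique $Y^*\cup Z^*$ is visible, each $X^*$-vertex being attached to only one side of the bipartition, and (ii) every original domination constraint of $G$ is reproduced. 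The parameter would be shifted additively, $k^*=k+c$, where $c$ is the number of forced gadget vertices, exactly as $k^*=k+2$ earlier.

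The correctness argument then splits into the two familiar implications. For the forward direction I would take a dominating set $D$ of $G$, adjoin the forced gadget vertices, and verify the secure condition vertex-by-vertex: for each $v\notin D^*$ one exhibits a neighbour whose removal-and-replacement still dominates, which is routine because the biclique part dominates the whole gadget. For the reverse direction I would take a secure dominating set $D^*$ with $|D^*|\le k^*$, project it to $D'=D^*\cap V(G)$, and show that $D'$, augmented by at most one vertex, dominates $G$ within budget. Here the case analysis on how $D^*$ meets the forced gadget vertices is driven by Proposition~\ref{p:epn_sd}: its conclusion that $epn(v,D^*)$ induces a complete subgraph bounds how many uncovered vertices a single gadget vertex can be charged with, mirroring the case split of Claim~\ref{cl:bs_npc_sd}.

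The main obstacle, and the feature that genuinely distinguishes this theorem from Theorem~\ref{th:bs_sd}, is verifying that $G^*$ is chordal bipartite rather than merely bisplit. Concretely I must rule out every induced $C_6,C_8,\dots$ formed by the gadget edges together with the biclique and the attachment edges. The useful structural observation is that any cycle meeting the biclique in at least two vertices of $Y^*$ \emph{and} two of $Z^*$ automatically acquires a biclique chord, so the analysis reduces to cycles that use few biclique vertices, and the gadget must be engineered (in particular by keeping each $X^*$-vertex on a single side of the bipartition) so that these residual potential cycles are chorded as well. Producing a gadget that is simultaneously chordal-bipartite-safe, yields a clean bisplit partition, and still forces the swap condition is the delicate step; once such a gadget is fixed, the two correctness implications follow the pattern already established in this section.
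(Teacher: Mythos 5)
Your proposal correctly identifies the central difficulty (Theorem~\ref{th:bs_sd}'s gadget does not preserve chordal bipartiteness, so a fresh class-preserving construction is needed) and correctly anticipates the role of Proposition~\ref{p:epn_sd} in the reverse direction. However, it stops exactly at the point where the proof actually has to be done: you never exhibit the gadget, and you hedge on the source problem (``$DD$ restricted to instances carrying a chordal-bipartite-bisplit-compatible structure \dots if such a restriction is not directly available, \dots a covering/SAT-type problem''). Neither of these is a known NP-complete starting point you can simply cite, and without a concrete construction there is nothing to verify in Claims analogous to \ref{cl:bs_sd} and \ref{cl:bs_npc_sd}. As written, this is a plan for a proof rather than a proof, and the delicate step you flag at the end --- engineering a gadget that is simultaneously chordal-bipartite-safe, bisplit, and forces the swap condition --- is precisely the content of the theorem.

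The paper resolves both issues by a different route. It reduces from $SDD$ (not $DD$) on chordal bipartite graphs, which is NP-complete by \cite{pradhan2018minimum}, and the construction is a \emph{doubling}: take two disjoint copies of $G$, join the two copies of the $X$-side by a complete bipartite graph (this becomes the biclique $Y'\cup Z'$ of the bisplit partition, with $Y_1\cup Y_2$ as the third stable set), and attach a small four-vertex gadget $k,l,m,n$ with pendants $l,n$ forcing $|S'\cap\{k,l\}|\ge 1$ and $|S'\cap\{m,n\}|\ge 1$. The parameter transforms multiplicatively as $k\mapsto 2k+2$, not additively as in your $k^*=k+c$. Chordal bipartiteness then follows because each copy is already chordal bipartite and any induced $C_6$ straddling the two copies is forced to pick up a chord inside the biclique --- essentially the structural observation you sketched, but made to work by the doubling rather than by a bespoke attachment of $X^*$-vertices. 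Reducing $SDD$ to $SDD$ also makes the reverse direction cleaner: the symmetry of the doubled graph lets one recover a secure dominating set of $G$ of size $|S^*\cup W|/2\le k$ directly, whereas a $DD$-to-$SDD$ reduction would require re-deriving domination of $G$ from security in $G^*$ case by case. If you want to complete your approach, the honest path is to adopt this change of source problem and construction; pursuing the $DD$-based plan would first require establishing NP-completeness of $DD$ on a suitable restricted class, which is an additional unproved step.
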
	
\begin{proof}
The candidate problem for reduction is $SDD$ on $chordal~bipartite~graphs$ which is known to be NP-complete\cite{pradhan2018minimum}.\\
 $ \textbf {\textit{Construction:}} $	
 Let $G=(X \cup Y,E)$ be a chordal bipartite graph with bipartitions $X$ and $Y$. Let vertex set $X = \{k_1, k_2, \dots, k_p \}$ and $Y=\{l_1, l_2, \dots, l_q \}$. We construct $G'= (X_1 \cup X_2 \cup Y_1 \cup Y_2,E')$ from $G$ as follows. The vertex set $V(G')$ consists of original vertices of $G$, a copy of vertices of $G$ labelled as $\{m_1, m_2, \dots, m_p;$ $n_1, n_2, \dots, n_q \}$ and four additional vertices $k, l, m, n$. Define $X_1=\{k, k_i | k_i \in X; 1\leq i \leq p \}$; $Y_1=\{l, l_i | l_i \in Y; 1\leq i \leq q \}$; $X_2=\{m, m_i | k_i \in X; 1\leq i \leq p \}$ and $Y_2=\{n, n_i | l_i \in Y; 1\leq i \leq q \}$. The edge set $E'$ is defined as $E'=\{E(G)\} \cup \{\{m_i,n_j\} | \{k_i,l_j\} \in E; 1\leq i \leq p, 1 \leq j \leq q\} \cup \{\{k_i,m_j\} | \forall i,j; 1\leq i \leq p, 1 \leq j \leq q\}\cup \{E^*\}$ where $E^*=\{\{l,k\}, \{k,m\}, \{m,n\}, \{k,m_i\}, \{m,k_i\} | \forall i; 1\leq i \leq p\}$. We note that, $|V(G')|=2 \cdot |V(G)| +4$ and $|E'|=2 \cdot |E(G)|+(p+1)^2 +2$.
	\\Thus $G'$ can be constructed in polynomial-time. For an illustration of this construction, we refer to Figure \ref{fig:cbbs_sd}.

	\begin{figure}[htbp]
	\begin{center}
\begin{tikzpicture}[scale=0.7, transform shape]
   
    % The set X* with x' marked
        \node[draw, ellipse, minimum width=4cm, minimum height=10cm] (X') at (7,0) {};
    \node[draw, ellipse, minimum width=2cm, minimum height=4.2cm] (I1) at (7,2.2) {};
      \node[draw, ellipse, minimum width=2cm, minimum height=4.2cm] (I2) at (7,-2.2) {};
    \node at (7,-4) {$Y_2$};
    \node at (7,4) {$Y_1$};
    \filldraw[black] (7,3) circle (2pt) node[anchor=east](x) {$l$};
    \filldraw[black] (7,-3) circle (2pt) node[anchor=east](x') {$n$};
       
    % The sets Y*, Z* with labels
    \node[draw, ellipse, minimum width=8cm, minimum height=4cm] (Y') at (16,2.5) {};
    \node[draw, ellipse, minimum width=5cm, minimum height=2cm] (Y) at (16.5,2.5) {};
    \node at (19.5,2.5) {$X_1$};
    \node at (18.7,2.5) {};
    \node[draw, ellipse, minimum width=8cm, minimum height=4cm] (Z') at (16,-2.5) {};
     \node[draw, ellipse, minimum width=5cm, minimum height=2cm] (Z) at (16.5,-2.5) {};
    \node at (19.5,-2.5) {$X_2$};
     \node at (18.7,-2.5) {};
     \filldraw[black] (13,3) circle (2pt) node[anchor= west](y) {$k$};
    \filldraw[black] (13,-3) circle (2pt) node[anchor= west](z) {$m$};
    %nodes of G
    \filldraw[black] (7.3,2) circle (1pt) node[anchor= east](x3) {$l_2$};
      \filldraw[black] (7.3,0.8) circle (1pt) node[anchor= east](x2) {$l_q$};
             \filldraw[black] (7.3,1.2) circle (1pt);
               \filldraw[black] (7.3,1.4) circle (1pt);
	\filldraw[black] (7.3,1.6) circle (1pt);
       \filldraw[black] (7.3,2.4) circle (1pt) node[anchor= east](x1) {$l_1$};
       
       \filldraw[black] (7.3,-2) circle (1pt) node[anchor= east](x3') {$n_2$};
      \filldraw[black] (7.3,-0.8) circle (1pt) node[anchor= east](x2') {$n_q$};
            \filldraw[black] (7.3,-1.2) circle (1pt);
  \filldraw[black] (7.3,-1.4) circle (1pt);
	\filldraw[black] (7.3,-1.6) circle (1pt);
       \filldraw[black] (7.3,-2.4) circle (1pt) node[anchor= east](x1') {$n_1$};
              
     \filldraw[black] (15,2.5) circle (1pt) node[anchor= south](c1) {$k_1$};
      \filldraw[black] (16,2.5) circle (1pt) node[anchor= south](c2) {$k_2$};
       \filldraw[black] (16.8,2.5) circle (1pt);
       \filldraw[black] (17,2.5) circle (1pt);
	\filldraw[black] (17.2,2.5) circle (1pt);
       \filldraw[black] (18,2.5) circle (1pt) node[anchor= south](c3) {$k_p$};
    
    \filldraw[black] (15,-2.5) circle (1pt) node[anchor= north](c1') {$m_1$};
\filldraw[black] (16,-2.5) circle (1pt) node[anchor= north](c2) {$m_2$};
\filldraw[black] (16.8,-2.5) circle (1pt);
       \filldraw[black] (17,-2.5) circle (1pt);
	\filldraw[black] (17.2,-2.5) circle (1pt);
       \filldraw[black] (18,-2.5) circle (1pt) node[anchor= north](c3) {$m_p$};
    % The enclosing rectangle for G*
    \node[draw, rectangle, minimum width=9cm, minimum height=10cm,] (Box) at (16,0) {};
      \node[below=of Box, xshift=-3.5cm, yshift=0.5cm] (G') {$G'$};
    \draw[decorate, decoration={brace,mirror}](7.5,0.8)--(7.5,2.4);
     \draw[decorate, decoration={brace}](7.5,-0.8)--(7.5,-2.4);
    % Connections
    \draw (7,3)--(13,3)--(13,-3)--(7,-3);
\draw[transparent] (13,2.5)--(7.3,2.4);
      \draw [transparent](13,2.5)--(7.3,0.8);
       \draw[dotted, thick] (14,2.5)--(7.6,1.6);
       \draw (13,3)--(15,-2.5);
 	\draw (13,3)--(16,-2.5);
 	\draw (13,3)--(18,-2.5);
\draw[dotted, thick] (14,-2.5)--(7.6,-1.6);
      \draw[transparent] (13,-2.5)--(7.3,-0.8);
       \draw [transparent](13,-2.5)--(7.3,-2);
       \draw (13,-3)--(15,2.5);
 	\draw (13,-3)--(16,2.5);
 	\draw (13,-3)--(18,2.5);
 \end{tikzpicture}
\end{center}
\caption{$SDD$ in chordal bipartite $\leq_p $ $SDD$ in chordal bipartite bisplit}
 \label{fig:cbbs_sd}

\end{figure}
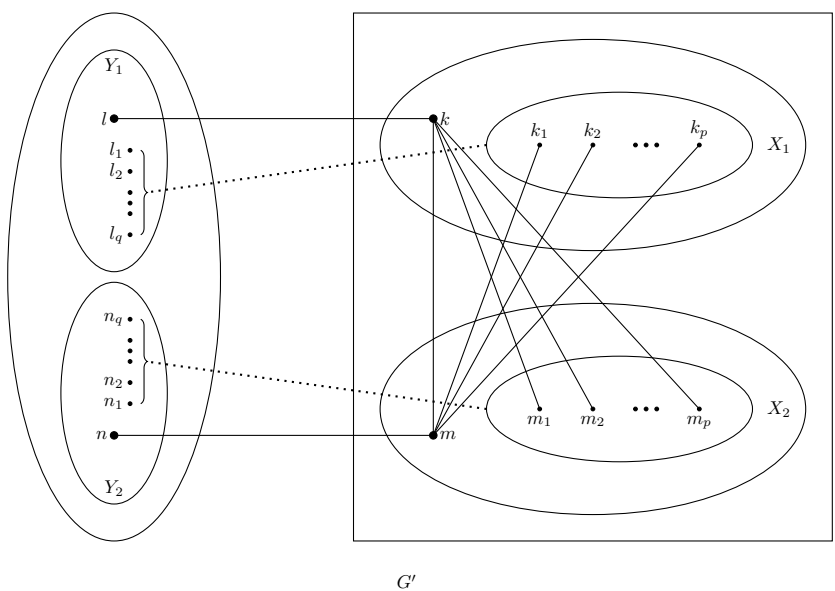
	\begin{claim}\label{bbs:cbbs_sd}
$G'$ is a bipartite bisplit graph.
\end{claim}
\begin{proof}
We first show that $G'=X' \cup Y' \cup Z'$ is a bisplit graph where $X, Y$ and $ Z$ are three stable sets and $Y \cup Z$ forms a biclique. By our construction, $X_1 \cup X_2$ induces a biclique. Also, no two vertices of $Y_1 \cup Y_2$ are adjacent; hence it induces and independent set. Therefore, the graph $G'=X' \cup Y' \cup Z'$ where $X'=Y_1 \cup Y_2$, $Y'=X_1$ and $Z'=X_2$ forms a bisplit graph. Further, by our construction vertices of $Y_1$ are non adjacent to vertices of $X_2$ and vertices of $Y_2$ are non adjacent to vertices of $X_1$. Thus $A=X_1 \cup Y_2$ and $B=X_2 \cup Y_1$ forms a bipartition for $G'$. This completes the proof.
 \end{proof}
\begin{claim}\label{cb:cbbs_sd}
$G'$ is chordal bipartite.
\end{claim}
\begin{proof}
Since $G$ is chordal bipartite graph, the subgraphs induced by $X_1 \cup Y_1$ and $X_2 \cup Y_2$ are chordal bipartite. So, to prove $G'$ is chordal bipartite is suffices to show that, every cycle $\mathscr{C}$ of length six involving vertices of both $X_1 \cup Y_1$ and $X_2 \cup Y_2$ has four vertices of biclique forming a $C_4$.\\ Assume that $\mathscr{C}$  have fewer than four vertices from biclique. Let $\mathscr{C}$ contain three vertices from the biclique $X_1 \cup X_2$. These three vertices induces a $P_3$ in $G'$. By our construction, if the pendants of $P_3$ are in $X_2$ then the other three vertices of $\mathscr{C}$ are from $Y_2$ and if the pendants of $P_3$ are in $X_1$ then the other three vertices of $\mathscr{C}$ are from $Y_1$. Also, by Claim \ref{bbs:cbbs_sd} $G'=A \cup B$ is bipartite. So, such a cycle $\mathscr{C}$ involving four vertices from one of the partitions and 2 vertices from other is not possible. By similar reasoning we may show that $\mathscr{C}$ cannot contain fewer than three vertices from the biclique. Thus the assumption fails and consequently the claim follows.
\end{proof}
\begin{claim}\label{npc:scs;cbbs_sd}
 $G$ has a secure dominating set $S$ with $ |S| \leq k $ if and only if $G'$ has a secure dominating set $S'$ with $|S'| \leq 2k+2$.
\end{claim}
\begin{proof}
$\textbf {\textit{Yes instance of SDD in G maps to Yes instance of SDD in G':}} $
\\Let $G$ have 	a secure dominating set $S$ of size atmost $k$. We shall prove that $G'$ has a secure dominating set $S'$ with $|S'| \leq 2k+2$.\\
Define $G^*=G' \setminus \{k, l, m, n\}$. Consider the set $S^*=S_1 \cup S_2$ where $S_1=\{k_i, l_j | k_i \in S, l_j \in S; 1\leq i \leq p, 1 \leq j \leq q\}$ and $S_2 = \{ m_i, n_j | k_i \in S, l_j \in S;  1\leq i \leq p, 1 \leq j \leq q\}$. Let $S'=S ^* \cup \{ k, m \}$ and $|S'| \leq 2k+2$. Clearly $S'$ is a dominating set of $G'$ as $S^*$ dominates all vertices of $G^*$ and $\{ k, m \} \in S'$ dominates $\{k, l, m, n\}$.
 Now we need to show $S'$ is a secure dominating set of $G'$. That is, we will prove for each vertex $v\in V(G')\setminus S'$, there exists a neighbour $v'\in S'$ such that the swap set $(S' \setminus\{ v' \})\cup \{ v \} $ is again a dominating set of $G'$.
 \begin{itemize}
 \item When $v=l$ there exists a neighbour $k\in S'$ such that $(S'\setminus\{ k \})\cup \{ l \} $ is a dominating set of $G'$. \\This is true as  $S^*$, a subset of the above set dominates all vertices of $G^*$ and $\{ l, m \}$ in the above set dominates the four other vertices. 
 \item When $v=n$ there exists a neighbour $m \in S'$ such that $(S' \setminus \{ m \})\cup \{ n \} $ is a dominating set of $G'$. \\This is true as  $S^*$, a subset of the above set dominates all vertices of $G^*$ and $\{ k, n \}$ in the above set dominates the four other vertices. 
 \item When $v\in V(G^*)\setminus S^*$.\\Since $S$ is a secure dominating set of $G$, $S_1$ securely dominates vertices of $Y_1$ as adjacencies in $G$ are preserved in $G'$. As a consequence, $S^*$ securely dominates vertices of $Y_1 \cup Y_2$. It suffices to show that, $v\in (V(G^*)\setminus S^*) \cap (X_1 \cup X_2)$ is securely dominated. Since $S^*$ is a subset of $S'$ and $S^*$ is a dominating set of $X_1 \cup X_2$, for every vertex $v\in (V(G^*)\setminus S^*) \cap (X_1 \cup X_2)$, there exists a neighbour $v' \in S^* \subseteq S'$. Therefore, for each $v \in (V(G^*)\setminus S^*) \cap (X_1 \cup X_2)$, there exists a neighbour $v'\in S'$ such that $(S' \setminus\{ v' \})\cup \{ v \} $ is a dominating set of $G'$. \\This is true as  $\{ k, m \}$ is a subset of the above set and it dominates all vertices of $X_1 \cup X_2$.
  \end{itemize}
 Therefore, $S'=S^* \cup \{ k, m \}$ is a secure dominating set of $G'$ with $|S'| \leq 2k+2$.\\
 $\textbf {\textit{Yes instance of SDD in G' maps to Yes instance of  SDD in G:}} $
\\ Let $G'$ have a secure dominating set $S'$ with $|S'|\leq 2k+2$. We shall prove that $G$ has a secure dominating set $S$ with $|S|\leq k$. We note that, since $l$ and $n$ are pendant vertices of $G'$, $|S' \cap \{k, l\}| \geq 1$ and $|S' \cap \{m, n\}| \geq 1$. Let $G^*=G' \setminus \{k, l, m, n\}$. Consider $S^*= S' \cap V(G^*)$. $|S^*| = |S'|-[|S' \cap \{k, l\}| + |S' \cap \{m, n\}|] \leq (2k+2)-2 = 2k$. If $S^*$ securely dominates $G^*$, then we are done. Because, $S = S^* \cap (X_1 \cup Y_1)$ forms a secure dominating set of $G$ and since $G^*$ is symmetric $|S| = |S^*| / 2 \leq k$. So, we assume $S^*$ is not a secure dominating set of $G^*$. Let $W \subseteq V(G^*)$ be the set of vertices that are not securely dominated by $S^*$. But, $W$ was securely dominated by $S'$ in $G'$. This implies that $W$ was securely dominated be vertices of $S' \cap \{k, l, m, n\}$ in$G^*$. Thus, by construction, $W \subseteq X_1 \cup X_2$. Let  $W_1=W \cap X_1$ and $W_2=W \cap X_2$. Without loss of generality, assume $W_1\neq \emptyset$. We know that $W_1$ is securely dominated by $S'$ in $G'$, which imply $m \in S'$. By Proposition \ref{p:epn_sd}, $epn(m, S')$ is complete. It follows that $W_1$ induces a complete subgraph of $G'$ and $n \in S'$. Consequently, $|S' \cap \{m,n\}|=2$. Further, since $X_1$ is an independent set, we have $|W_1|=1$. Now, $W_2 \neq \emptyset$ or $W_2= \emptyset$. If $W_2 \neq \emptyset$, by similar argument it can be shown that $|W_2|=1$ and $|S' \cap \{k,l\}|=2$. If $W_2 = \emptyset$, then $|W_2|=0$ and $|S' \cap \{k,l\}|\geq 1$. From the above, $|W|=|W_1|+|W_2| \leq |S' \cap \{k, l\}| + |S' \cap \{m, n\}|-2$. It follows that $|S^*| = |S'|-[|S' \cap \{k, l\}| + |S' \cap \{m, n\}|] \leq |S'|-[|W|+2]$. Also, $S^* \cup W$ is a secure dominating set of $G^*$ and $|S^* \cup W| = |S^*|+|W| \leq |S'|-[|W|+2]+|W|=|S'|-2 \leq 2k$. Further, since $G^*$ is symmetric $S = (S^*\cup W) \cap (K_1 \cup I_1)$ forms a secure dominating set of $G$ and $|S| = |S^* \cup W| / 2 \leq k$.
\end{proof}
\noindent Thus from Claims \ref{bbs:cbbs_sd}, \ref{cb:cbbs_sd}  and \ref{npc:scs;cbbs_sd}, Theorem \ref{th:cbbs_sd} follows.
\end{proof}
\noindent Therefore, secure domination problem is hard on chordal bipartite bisplit graphs. An interesting  line of investigation is to study  its complexity on $P_5~free~chordal~bipartite~graphs$ precisely the chain graphs.
\subsection{Chain graphs}\label{chain:sd}
In this section, we present an algorithm to compute the minimum secure dominating set of a chain graph.
\begin{defn}
 A bipartite graph $G = (X, Y, E)$ is a chain graph, if there exists a chain ordering of $X \cup Y$ , i.e. ($x_1, x_2, \dots, x_n, y_1, y_2, \dots, y_m$) such that $N(x_1) \subseteq N(x_2) \subseteq \dots \subseteq N(x_n)$ and $N(y_1) \supseteq N(y_2) \supseteq \dots \supseteq N(y_m)$. 
 \end{defn}
\noindent The chain ordering of a chain graph can be computed in linear-time \cite{pinar_et.al}. Define a relation $R$ on $X$ as follows. Let $x_i$ and $x_j$ are related if $N(x_i) = N(x_j)$. Observe that $R$ is an equivalence relation. Assume that $X_1, X_2, \dots, X_k$ is the partition of $X$ based on the relation $R$. Define $Y_1 = N(X_1)$ and $Y_i = N(X_i) \setminus \bigcup_{ j=1}^{i-1} N(X_j)$; $i = 2, 3, \dots k$. Then,$Y_1, Y_2, \dots, Y_k$ forms a partition of $Y$. Such partition $X_1, X_2, \dots, X_k, Y_1, Y_2, \dots,Y_k$ of $X \cup Y$ is called a $proper~ ordered~chain~partition$ of $X \cup Y$. Note that the number of sets in the partition of $X$ and $Y$ are same. Further, the set of pendent vertices of $G$ are contained in the set $X_1 \cup Y_k$. Let $X_1' \subseteq X_1$ and $Y_k' \subseteq Y_k$ denote the set of pendant vertices in $G$.\\
\begin{algorithm}[H]
\caption{MSD: Chain graphs}
\label{alg:chain_sd}
\begin{algorithmic}
\Require A connected chain graph $G=(X \cup Y ,E)$ with proper order chain partition $X_1, X_2, \dots, X_k$ and $Y_1, Y_2, \dots, Y_k$ of $X$ and $Y$.
\Ensure A minimum secure dominating set $S$ of $G$.
\State Let $S'= \{x_{m-1}, x_m, y_1, y_2\}$.
\If{$X_1' \geq 2$ where $x_1 \in X_1'$}
\If{$Y_k' \geq 2$ where $y_n \in Y_k'$}
\State  $S=S' \cup  (X_1' \setminus \{x_1\}) \cup (Y_k' \setminus \{y_n\}) $
\State \Return $S$. 
\Else
\State  $S=S' \cup (X_1' \setminus \{x_1\})$
\State \Return $S$.
\EndIf
\Else
\If{$Y_k' \geq 2$ where $y_n \in Y_k'$}
\State  $S=S' \cup (Y_k' \setminus \{y_n\})$
\State \Return $S$
\Else
\State  $S=S'$
\State \Return $S$
\EndIf
 \EndIf 
 \end{algorithmic}
  \end{algorithm}
  Therefore, $MSD$ is polynomial-time solvable on chain graphs.
\section{Inapproximability of Secure Domination Problem in Bisplit Graphs}\label{ssec:inapp_SD_bs}
	We investigate the approximation hardness of Secure domination problem in bisplit graphs. To
achieve this, we need the following result on $MSD$ in split graphs.
\begin{theorem}\label{th:iapp;sd;s}\cite{HWang_sd}
If there is some $\epsilon > 0$ such that a polynomial-time algorithm can approximate the secure domination problem for a split graph $G=(V,E)$ within a ratio of $(1 - \epsilon)~ln~|V|$, then $NP \subseteq DTIME(|V|^{O(log~log~|V|)})$.
\end{theorem}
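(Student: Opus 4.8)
The plan is to derive this result from Feige's inapproximability of \textsc{Minimum Set Cover} by routing it through the \textsc{Minimum Dominating Set} problem on split graphs and then through the gap-preserving split-graph reduction already used in the proof of Theorem~\ref{th:1}. The factor $(1-\epsilon)\ln|V|$ together with the hypothesis $NP \subseteq DTIME(|V|^{O(\log\log|V|)})$ is exactly Feige's threshold, which strongly suggests that \textsc{Set Cover} is the ultimate source of hardness.

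First, I would recall the standard correspondence between \textsc{Minimum Set Cover} and \textsc{Minimum Dominating Set} on split graphs. Given a set-cover instance with universe $U=\{u_1,\dots,u_n\}$ and family $\mathcal{S}=\{S_1,\dots,S_m\}$, build a split graph $H$ whose clique is $\{S_1,\dots,S_m\}$ and whose independent set is $U$, joining $u_j$ to $S_i$ precisely when $u_j\in S_i$. Minimum dominating sets of $H$ and minimum set covers of the instance agree up to an additive constant, so Feige's theorem transfers: \textsc{Minimum Dominating Set} on split graphs admits no polynomial-time $(1-\epsilon)\ln|V|$-approximation unless $NP \subseteq DTIME(|V|^{O(\log\log|V|)})$.

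Second, I would invoke the reduction of Theorem~\ref{th:1}. For a split graph $G$ let $G^*$ be obtained by appending the path $P_2\colon x-y$ with $x$ adjacent to $y$ and to every vertex of $G$. By Claim~\ref{cl:split} we have $\gamma_s(G^*)=\gamma(G)+1$, while $|V(G^*)|=|V(G)|+2$. The crucial feature is that this reduction is approximation-preserving: mirroring the backward direction of Claim~\ref{cl:split}, from any secure dominating set $S^*$ of $G^*$ one recovers in polynomial time a dominating set $D$ of $G$ with $|D|\le|S^*|-1$. Now suppose, for contradiction, that a polynomial-time algorithm $A$ approximates $MSD$ on split graphs within $(1-\epsilon)\ln|V|$. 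Given a split instance $G$, construct $G^*$, run $A$ to get $S^*$, and convert $S^*$ into a dominating set $D$ of $G$. Chaining the bounds yields
\[
|D| \;\le\; |S^*|-1 \;\le\; (1-\epsilon)\ln|V(G^*)|\cdot\gamma_s(G^*)-1 \;=\; (1-\epsilon)\ln(|V(G)|+2)\cdot(\gamma(G)+1)-1 .
\]
Using $\ln(|V(G)|+2)=(1+o(1))\ln|V(G)|$ and the fact that Feige-hard instances force $\gamma(G)\to\infty$, the additive and lower-order terms are absorbed, so $|D|\le(1-\epsilon')\ln|V(G)|\cdot\gamma(G)$ for some $\epsilon'>0$ and all sufficiently large instances, contradicting the hardness established in the first step.

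The main obstacle I anticipate is the bookkeeping in the final step: one must verify that the multiplicative ratio genuinely survives both the additive shift $\gamma_s(G^*)=\gamma(G)+1$ and the vertex-count change $|V(G^*)|=|V(G)|+2$. This is legitimate only because Feige's hard \textsc{Set Cover} instances drive the optimum to grow with the instance size; for families of bounded optimum the additive constant could dominate the ratio, so the argument must be carried out precisely in that scaling regime, and the slack $\epsilon\to\epsilon'$ must be chosen to accommodate the $\ln(|V|+2)/\ln|V|$ correction.
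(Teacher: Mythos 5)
The paper does not prove this statement at all: Theorem~\ref{th:iapp;sd;s} is imported verbatim from \cite{HWang_sd} and serves only as the launching point for the paper's own Theorem~\ref{th:iapp;sd;bs}. So your proposal is a reconstruction rather than a rederivation, and as a reconstruction it is sound and follows exactly the template the paper itself deploys one level up the chain: an approximation-preserving reduction with a constant additive shift in the optimum, a constant additive change in $|V|$, a brute-force escape for instances with bounded optimum, and the arithmetic $(1-\epsilon)(1+\epsilon)=1-\epsilon^2$ to absorb the slack (compare Algorithm~\ref{alg:approx-sdom;s} and the closing computation of Section~\ref{ssec:inapp_SD_bs}). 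Two points deserve tightening. First, your key inequality $|D|\le|S^*|-1$ is asserted by appeal to ``the backward direction of Claim~\ref{cl:split}'' but needs the small argument via Proposition~\ref{p:epn_sd}: letting $D'=S^*\cap V(G)$ and $W$ the vertices of $G$ undominated by $D'$, if $W\neq\emptyset$ then every $w\in W$ lies in $epn(x,S^*)$, which forces $y\in S^*$ (else $y\in epn(x,S^*)$ would have to be adjacent to $W$), hence $|S^*\cap\{x,y\}|=2$; since $epn(x,S^*)$ induces a clique, a single vertex of $W$ dominates all of $W$, giving $|D|\le(|S^*|-2)+1$. Second, the phrase ``Feige-hard instances force $\gamma(G)\to\infty$'' should be replaced by the deterministic device the paper uses: fix an integer $x>1/\epsilon$, solve instances with $\gamma(G)<x$ exactly in time $n^{O(x)}$, and run the ratio analysis only when $\gamma(G)\ge x$, so that $\alpha(\gamma(G)+1)-1<\alpha(1+\tfrac{1}{x})\gamma(G)\le\alpha(1+\epsilon)\gamma(G)$. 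With those two repairs your argument is a correct and essentially standard proof of the cited theorem, differing from the paper only in that you start from the $\ln|V|$-inapproximability of domination on split graphs (via Set Cover) where the paper simply cites the secure-domination statement.
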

\noindent By using  theorem \ref{th:iapp;sd;s}, we will prove similar result for $MSD$ in bisplit graphs.
\begin{theorem}\label{th:iapp;sd;bs}
If there is some $\epsilon > 0$ such that a polynomial-time algorithm can approximate the secure domination problem for a bisplit graph $G=(V,E)$ within a ratio of $(1 - \epsilon)~ln~|V|$, then $NP \subseteq DTIME(|V|^{O(log~log~|V|)})$.
\end{theorem}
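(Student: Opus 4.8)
The plan is to give a gap-preserving (approximation-preserving) reduction from $MSD$ on split graphs, whose inapproximability is asserted by Theorem \ref{th:iapp;sd;s}, to $MSD$ on bisplit graphs. Given a split graph $G=(K\cup I,E)$ with clique $K=\{v_1,\dots,v_n\}$ and independent set $I$, I would build a bisplit graph $G'$ using the same ``doubling'' philosophy that drives the proof of Theorem \ref{th:cbbs_sd} (Claim \ref{npc:scs;cbbs_sd}). Since a bisplit graph is $K_4$-free it cannot contain $K$ as a clique, so the clique must be simulated by a biclique: take two copies of the clique, $K_1=\{k_1,\dots,k_n\}$ and $K_2=\{m_1,\dots,m_n\}$, make $K_1\cup K_2$ a complete bipartite graph (so the adjacency $v_iv_j$ of $K$ is realised by the cross edges $k_im_j$ and $m_ik_j$), take two copies $I_1,I_2$ of $I$ with $I_1$ joined to $K_1$ and $I_2$ joined to $K_2$ exactly as $I$ is joined to $K$ in $G$, and attach a constant-size anchor gadget (a short path or a few pendant vertices hung off the two sides of the biclique, in the spirit of the $P_4$ of Theorem \ref{th:bs_sd} and the vertices $k,l,m,n$ of Claim \ref{npc:scs;cbbs_sd}). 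Setting $X'=I_1\cup I_2\cup\{\text{anchors on the }X\text{-side}\}$, $Y'=K_1$ and $Z'=K_2$ gives three stable sets with $Y'\cup Z'$ a biclique, so $G'$ is bisplit; this step is routine.

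The substance is the secure-domination bookkeeping, which I would organise as two inequalities yielding $\gamma_s(G')=2\gamma_s(G)+c$ for an explicit constant $c$. In the forward direction, from a secure dominating set $S$ of $G$ I take $S'=S_1\cup S_2\cup\{\text{anchors}\}$, where $S_1,S_2$ are the images of $S$ in the two copies, and verify the swap property vertex by vertex; a clique vertex $v_i\in S$ is represented by the pair $k_i,m_i$, and the cross edges of the biclique reproduce exactly the domination and swap moves that $v_i$ performed in $G$, while the anchors secure the constantly many new vertices. In the backward direction, given a secure dominating set $S'$ of $G'$ I restrict to $V(G')$ minus the anchors, argue via the pendant anchors that a bounded number of anchor vertices lie in $S'$, and then use Proposition \ref{p:epn_sd} to control the vertices left unsecured inside a single copy: any such defect set is an external private neighbourhood, hence induces a complete subgraph, hence (as $K_1,K_2,I_1,I_2$ are stable) has size at most one and can be repaired by adding $O(1)$ vertices. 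Exploiting the symmetry between the two copies, I keep the cheaper copy, obtaining in polynomial time a secure dominating set $S$ of $G$ with $|S|\le (|S'|-c)/2$. This mirrors Claim \ref{npc:scs;cbbs_sd} closely, the one genuinely new point being that here each copy is the split graph with its clique edges \emph{deleted}, so the clique's domination power is recovered only jointly through the biclique; this coupling between the copies is the step I expect to be the main obstacle, and it is precisely what forces the careful pairing of $k_i$ with $m_i$ in both directions.

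Finally I would assemble the gap-preservation argument. The construction gives $|V(G')|=2|V(G)|+O(1)$, hence $\ln|V(G')|=(1+o(1))\,\ln|V(G)|$, together with $\gamma_s(G')=2\gamma_s(G)+O(1)$. Suppose, for contradiction, that for some $\epsilon>0$ there is a polynomial-time algorithm approximating $MSD$ on bisplit graphs within $(1-\epsilon)\ln|V|$. Running it on $G'$ yields a secure dominating set $S'$ with $|S'|\le(1-\epsilon)\ln|V(G')|\cdot\gamma_s(G')$; converting $S'$ back as above gives, in polynomial time, a secure dominating set $S$ of $G$ with $|S|\le |S'|/2$. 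Since on the hard instances supplied by Theorem \ref{th:iapp;sd;s} the optimum $\gamma_s(G)$ grows without bound, the additive constants are lower-order, so $|S|\le(1-\epsilon)(1+o(1))\ln|V(G)|\cdot\gamma_s(G)\le(1-\epsilon')\ln|V(G)|\cdot\gamma_s(G)$ for a suitable $\epsilon'>0$. This is a polynomial-time $(1-\epsilon')\ln|V(G)|$-approximation for $MSD$ on split graphs, which by Theorem \ref{th:iapp;sd;s} forces $NP\subseteq DTIME(|V|^{O(\log\log|V|)})$, completing the proof.
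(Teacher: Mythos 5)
Your proposal is correct and follows essentially the same route as the paper: the identical doubling construction (two copies of the split graph with the clique edges deleted and the clique simulated by the biclique $K_1\cup K_2$, plus a constant-size anchor gadget --- the paper's four vertices $k,l,m,n$), the same relation $\gamma_s(G')=2\gamma_s(G)+2$ established via Proposition \ref{p:epn_sd} in the backward direction, and the same gap-preservation bookkeeping. The only cosmetic difference is that the paper disposes of instances with small optimum by brute-forcing solutions of size below a fixed threshold $x>1/\epsilon$ rather than appealing to the optimum growing without bound on the hard instances; the two devices are interchangeable.
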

\begin{proof} We establish an approximation preserving reduction from $SDD$ in split graphs to $SDD$ in bisplit graphs. Let $G=(V,E)$ be a split graph where the vertex set $V=K\cup I$. Here, $K$ induces a clique on vertex set $K=\{k_1, k_2, \dots, k_p \}$ and $I$ induces an independent set on vertex set $I=\{l_1, l_2, \dots, l_q \}$.We construct $G'= (V',E')$ from $G$ as follows. The vertex set $V(G')$ consists of original vertices of $G$, a copy of vertices of $G$ labelled as $\{m_1, m_2, \dots, m_p;$ $n_1, n_2, \dots, n_q \}$ and four additional vertices $k, l, m, n$.  Define $K_1=\{k, k_i | k_i \in K; 1\leq i \leq p \}$; $I_1=\{l, l_i | l_i \in I; 1\leq i \leq q \}$; $K_2=\{m, m_i | k_i \in K; 1\leq i \leq p \}$ and $I_2=\{n, n_i | l_i \in I; 1\leq i \leq q \}$. The edge set $E'$ is modified by removing $E(K)$ and adding additional edges. Define $E'=\{E(G)\setminus E(k)\} \cup \{\{m_i,n_j\} | \{k_i,l_j\} \in E; 1\leq i \leq p, 1\leq j \leq q\} \cup \{\{k_i,m_j\} | \forall i,j; 1\leq i,j \leq p\}\cup \{E^*\}$ where $E^*=\{\{l,k\}, \{k,m\}, \{m,n\}, \{k,m_i\}, \{m,k_i\} | \forall i; 1\leq i \leq p\}$. We note that $|V'|=2 \cdot |V(G)| +4$, $|E'|=2 \cdot |E(G)\setminus E(K)|+(p+1)^2 +2$.Thus $G'$ can be constructed in polynomial-time.
\begin{claim}\label{bs:inapp_sd}
$G'$ is a bisplit graph.
\end{claim}
\begin{proof}
We show that $G'=X \cup Y \cup Z$ is a bisplit graph where $X, Y$ and $ Z$ are three stable sets and $Y \cup Z$ forms a biclique. By our construction, $K_1 \cup K_2$ induces a biclique. Also, no two vertices of $I_1 \cup I_2$ are adjacent; hence it induces and independent set. Therefore, the graph $G'=X \cup Y \cup Z$ where $X=I_1 \cup I_2$, $Y=K_1$ and $Z=K_2$ forms a bisplit graph. 
 \end{proof}
\begin{claim}\label{npc:inapp_sd}
 $G$ has a secure dominating set $S$ with $ |S| \leq k $ if and only if $G'$ has a secure dominating set $S'$ with $|S'| \leq 2k+2$.
\end{claim}
\begin{proof}
$\textbf {\textit{Yes instance of SDD in G maps to Yes instance of SDD in G':}} $\\
Define $G^*=G' \setminus \{k, l, m, n\}$. Consider the set $S^*=S_1 \cup S_2$ where $S_1=\{k_i, l_j | k_i \in S, l_j \in S; 1\leq i \leq p, 1 \leq j \leq q\}$ and $S_2 = \{ m_i, n_j | k_i \in S, l_j \in S;  1\leq i \leq p, 1 \leq j \leq q\}$. Let $S'=S ^* \cup \{ k, m \}$ and $|S'| \leq 2k+2$. Clearly $S'$ is a dominating set of $G'$ as $S^*$ dominates all vertices of $G^*$ and $\{ k, m \} \in S'$ dominates $\{k, l, m, n\}$.
 Now we need to show $S'$ is a secure dominating set of $G'$. That is, we will prove for each vertex $v\in V(G')\setminus S'$, there exists a neighbour $v'\in S'$ such that the swap set $(S' \setminus\{ v' \})\cup \{ v \} $ is again a dominating set of $G'$.
 \begin{itemize}
 \item When $v=l$ there exists a neighbour $k\in S'$ such that $(S'\setminus\{ k \})\cup \{ l \} $ is a dominating set of $G'$. \\This is true as  $S^*$, a subset of the above set dominates all vertices of $G^*$ and $\{ l, m \}$ in the above set dominates the four other vertices.
 \item When $v=n$ there exists a neighbour $m \in S'$ such that $(S' \setminus \{ m \})\cup \{ n \} $ is a dominating set of $G'$. \\This is true as  $S^*$, a subset of the above set dominates all vertices of $G^*$ and $\{ k, n \}$ in the above set dominates the four other vertices.
 \item When $v\in V(G^*)\setminus S^*$.\\
 Since $S$ is a secure dominating set of $G$, $S_1$ securely dominates vertices of $I_1$ as adjacencies across $K$ and $I$ in $G$ are preserved in $G'$. As a consequence, $S^*$ securely dominates vertices of $I_1 \cup I_2$ and due to removal of edges $E(K)$ in $G$, we see that $S^*$ dominates vertices of $K_1 \cup K_2$. It suffices to show that, $v\in (V(G^*)\setminus S^*) \cap (K_1 \cup K_2)$ is securely dominated. Since $S^*$ is a subset of $S'$ and $S^*$ is a dominating set of $K_1 \cup K_2$, for every vertex $v\in (V(G^*)\setminus S^*) \cap (K_1 \cup K_2)$, there exists a neighbour $v' \in S^* \subseteq S'$. Therefore, for each $v \in (V(G^*)\setminus S^*) \cap (K_1 \cup K_2)$, there exists a neighbour $v'\in S'$ such that $(S' \setminus\{ v' \})\cup \{ v \} $ is a dominating set of $G'$. \\This is true as  $\{ k, m \}$ is a subset of the above set and it dominates all vertices of $K_1 \cup K_2$.
 \end{itemize}
 Therefore, $S'=S^* \cup \{ k, m \}$ is a secure dominating set of $G'$ with $|S'| \leq 2k+2$.\\
 
 \noindent $\textbf {\textit{Yes instance of SDD in G' maps to Yes instance of  SDD in G:}} $
\\ Let $G'$ have a secure dominating set $S'$ with $|S'|\leq 2k+2$. We shall prove that $G$ has a secure dominating set $S$ with $|S|\leq k$.\\
We note that, since $l$ and $n$ are pendant vertices of $G'$, $|S' \cap \{k, l\}| \geq 1$ and $|S' \cap \{m, n\}| \geq 1$. Let $G^*=G' \setminus \{k, l, m, n\}$. Consider $S^*= S' \cap V(G^*)$. $|S^*| = |S'|-[|S' \cap \{k, l\}| + |S' \cap \{m, n\}|] \leq (2k+2)-2 = 2k$. If $S^*$ securely dominates $G^*$, then we are done. Because, $S = S^* \cap (K_1 \cup I_1)$ forms a secure dominating set of $G$ and since $G^*$ is symmetric $|S| = |S^*| / 2 \leq k$. So, we assume $S^*$ is not a secure dominating set of $G^*$. Let $W \subseteq V(G^*)$ be the set of vertices that are not securely dominated by $S^*$. But, $W$ was securely dominated by $S'$ in $G'$. This implies that $W$ was securely dominated be vertices of $S' \cap \{k, l, m, n\}$ in$G^*$. Thus, by construction, $W \subseteq K_1 \cup K_2$.\\
Let  $W_1=W \cap K_1$ and $W_2=W \cap K_2$. Without loss of generality, assume $W_1\neq \emptyset$. We know that $W_1$ is securely dominated by $S'$ in $G'$, which imply $m \in S'$. By Proposition \ref{p:epn_sd}, $epn(m, S')$ is complete. It follows that $W_1$ induces a complete subgraph of $G'$ and $n \in S'$.Consequently, $|S' \cap \{m,n\}|=2$. Further, since $K_1$ is an independent set, we have $|W_1|=1$. Now, $W_2 \neq \emptyset$ or $W_2= \emptyset$. If $W_2 \neq \emptyset$, by similar argument it can be shown that $|W_2|=1$ and $|S' \cap \{k,l\}|=2$. If $W_2 = \emptyset$, then $|W_2|=0$ and $|S' \cap \{k,l\}|\geq 1$.\\
From the above, $|W|=|W_1|+|W_2| \leq |S' \cap \{k, l\}| + |S' \cap \{m, n\}|-2$. It follows that $|S^*| = |S'|-[|S' \cap \{k, l\}| + |S' \cap \{m, n\}|] \leq |S'|-[|W|+2]$. Also, $S^* \cup W$ is a secure dominating set of $G^*$ and $|S^* \cup W| = |S^*|+|W| \leq |S'|-[|W|+2]+|W|=|S'|-2 \leq 2k$. Further, since $G^*$ is symmetric $S = (S^*\cup W) \cap (K_1 \cup I_1)$ forms a secure dominating set of $G$ and $|S| = |S^* \cup W| / 2 \leq k$.
\end{proof}
\noindent Let us assume that there exists some (fixed) $\epsilon  > 0$ such that $MSD$ for bisplit graphs with $|V'|$ vertices can be approximated within a ratio of $ \alpha = (1-\epsilon)~ln~|V'|$ by a polynomial-time algorithm $\mathcal{A}$. Let $x > 0$ be a fixed integer with $x > \frac{1}{\epsilon}$. By using algorithm $\mathcal{A}$, we construct a polynomial-time algorithm for $MSD$ in split graphs as follows.
\begin{algorithm}[H]
\caption{Approx-MSD split}
\label{alg:approx-sdom;s}
\begin{algorithmic}
\Require A Split graph $G=(V,E)$
\Ensure A minimum secure dominating set $S$ of $G$
\If{there exists a minimum secure dominating set $S$ of $G$ with $|S|< x$}
    \State \Return $S$
\Else
    \State Construct the bisplit graph $G'$ as described above
    \State Compute a $MSD$ $S'$ in $G'$ using algorithm $\mathcal{A}$
    \If {$W= \emptyset$ or $epn_{G^*}(m,S') = \emptyset$}
    \State $S \gets S' \cap V$
    \State \Return $S$
    \EndIf
    \If {$W \neq \emptyset$ and $epn_{G^*}(m,S') \neq \emptyset$}
    \State $S \gets (S' \cap V) \cup \{v\}$ for some $v \in epn_{G^*}(m,S')$
    \State \Return $S$
    \EndIf
 \EndIf
\end{algorithmic}
\end{algorithm}
\noindent Initially, if there is a minimum secure dominating set $S$ of $G$ with $|S| < x$, then it can be computed in polynomial-time. Now, since the algorithm $\mathcal{A}$ runs in polynomial-time, the Algorithm \ref{alg:approx-sdom;s} also runs in polynomial-time. If the returned set $S$ satisfies $|S| < x$ then $S$ is a minimum secure dominating set of $G$ and we are done. In the following, we will analyse the case when Algorithm \ref{alg:approx-sdom;s} returned the set $S$ with $|S| \geq x$.\\ Let $S_0$ and $S_0'$ are minimum secure dominating set of $G$ and minimum secure dominating set of $G'$, respectively. By Claim \ref{npc:inapp_sd} we have $|S_0'| =2 \cdot |S_0| + 2 = 2 \cdot (|S_0| + 1)$, where $|S_0| \geq x$. Now Algorithm \ref{alg:approx-sdom;s} can compute a secure dominating set of $G$ of size $|S| \leq \frac{|S'|-2}{2} = \frac{|S'|}{2} - 1 \leq \frac{\alpha \cdot |S_0'|}{2} - 1 = \frac{\alpha \cdot 2 \cdot (|S_0| + 1)}{2} - 1 = \alpha \cdot (|S_0| + 1) - 1 < \alpha \cdot (1 + \frac{1}{|S_0|})|S_0|$. Since $|S_0| \geq x$, $|S|  < \alpha \cdot (1 + \frac{1}{x})|S_0|$. Hence, Algorithm \ref{alg:approx-sdom;s} approximates secure domination problem for given split graph $G$ within the ratio $\alpha \cdot (1 + \frac{1}{x})$. Also $x$ is a positive integer such that $x > \frac{1}{\epsilon}$. It follows that $\alpha \cdot (1 + \frac{1}{x}) \leq \alpha \cdot (1 + \epsilon) = (1-\epsilon) \cdot (1 + \epsilon)~ln~|V'| = (1-\epsilon')~ln~|V|$, where $\epsilon' = \epsilon^2$. Also, $ln~|V'| = ln~(2 \cdot |V| + 4) \approx ~ln~|V| $ for suﬃciently large value of $|V|$.\\ Therefore, Algorithm \ref{alg:approx-sdom;s} approximates secure domination problem in split graphs within a ratio of $(1-\epsilon)~ln~|V|$ for some $\epsilon  > 0$. This contradiction to Theorem \ref{th:iapp;sd;s} completes the proof.
\end{proof}
\section{Bounded tree-width graphs}\label{btw_sd}
In this section, we prove that the $MSD$ can be solved in linear-time for bounded tree-width graphs. First, we formally define the parameter $tree-width$ of a graph. For a graph $G = (V,E)$, its tree decomposition is a pair $(T,S)$, where $T = (U,F)$ is a tree, and $S = {S_u | u \in U}$ is a collection of subsets of $V$ such that
\begin{itemize}
\item $\bigcup_{u \in U} S_u = V$
\item for each $xy \in E$, there exists $u\in U$ such that $x,y \in S_u$, and
\item for all $x \in V$,the vertices in the set $\{u \in U | x \in S_u\}$ forms a subtree of $T$.
\end{itemize}
The width of a tree decomposition $(T,S)$ of a graph G is defined as $max\{ |S_u| | u \in U \}-1$. The tree-width of a graph $G$ is the minimum width of any tree decomposition of $G$. A graph is said to be a bounded tree-width graph, if its tree-width is bounded. Now, we prove that the secure domination problem can be formulated as $CMSOL$.
\begin{theorem}\label{CMSOL_sd}
For a graph $G = (V, E)$ and a positive integer $k$, the $SDD$ problem can be expressed in $CMSOL$.
\end{theorem}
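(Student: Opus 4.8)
The plan is to express the secure domination condition as a Counting Monadic Second Order Logic (\emph{CMSOL}) formula over the graph $G=(V,E)$, so that Courcelle's theorem then yields a linear-time algorithm on bounded tree-width graphs. The guiding principle is that \emph{CMSOL} permits quantification over both vertices/edges and over \emph{sets} of vertices/edges, together with the incidence/adjacency predicate $\mathrm{adj}(u,v)$ (or $\mathrm{inc}(v,e)$) and equality; all the combinatorial content of secure domination is a bounded alternation of such quantifiers, so the main work is simply to write the formula cleanly and verify that every quantifier ranges over an admissible sort.

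First I would introduce an auxiliary abbreviation for domination. Define
\[
\mathrm{Dom}(D)\;:=\;\forall v\,\Big(v\in D\;\vee\;\exists u\,(u\in D\wedge \mathrm{adj}(u,v))\Big),
\]
which asserts that the vertex set $D$ is a dominating set. This is plainly a monadic second-order formula with $D$ a free set variable. Next I would capture the swap condition. For a fixed secure dominating candidate $S$, each $v\notin S$ needs a neighbour $u\in S$ whose replacement by $v$ keeps the set dominating. I would encode the swapped set $(S\setminus\{u\})\cup\{v\}$ not by constructing a new set variable but by inlining its membership predicate: write $w\in (S\setminus\{u\})\cup\{v\}$ as $\big((w\in S\wedge w\neq u)\vee w=v\big)$. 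Then domination of the swapped set becomes a formula $\mathrm{DomSwap}(S,u,v)$ obtained from $\mathrm{Dom}$ by substituting this compound membership predicate for ``$\,\cdot\in D$''. Collecting these, the secure domination predicate is
\[
\mathrm{SecDom}(S)\;:=\;\mathrm{Dom}(S)\;\wedge\;\forall v\Big(v\notin S\;\rightarrow\;\exists u\,\big(u\in S\wedge \mathrm{adj}(u,v)\wedge \mathrm{DomSwap}(S,u,v)\big)\Big).
\]

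To incorporate the cardinality bound ``$|S|\le k$'' for the \emph{SDD} instance, I would use the counting feature of \emph{CMSOL}. Since $k$ is a fixed constant for the decision problem (or, more robustly, one may quantify existentially over $k+1$ distinct vertex variables to forbid sets that are too large, or use the cardinality predicate available in the counting extension), the bound is expressible as a first-order constraint of constant size. The final sentence asserting a yes-instance is then
\[
\Phi_k\;:=\;\exists S\,\big(\mathrm{SecDom}(S)\;\wedge\;|S|\le k\big),
\]
a closed \emph{CMSOL} sentence of length depending only on $k$ (for fixed $k$) whose only second-order quantifier is the single $\exists S$, with all remaining quantifiers first-order.

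The step I expect to require the most care is the inlining used in $\mathrm{DomSwap}$: because the swapped set is described by a predicate rather than by an actual set variable, I must make sure that the substitution is carried out uniformly at \emph{every} occurrence of set membership inside $\mathrm{Dom}$, including the ``$w\in D$'' disjunct and the ``$u\in D$'' inside the existential, and that the bound variables $u,v$ of the swap are kept distinct from the dummy variables of $\mathrm{Dom}$ to avoid accidental variable capture. Once the formula is confirmed to be a well-formed \emph{CMSOL} sentence with all quantifiers over the correct vertex and vertex-set sorts, the theorem follows immediately; the subsequent linear-time algorithm for bounded tree-width graphs will then be a direct invocation of Courcelle's theorem applied to $\Phi_k$.
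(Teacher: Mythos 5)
Your proposal is correct and follows essentially the same route as the paper: write a \emph{CMSOL} formula $\mathrm{Dom}(D)$ for domination with the cardinality bound, then express secure domination by requiring, for each $v\notin S$, a neighbour $u\in S$ whose swap leaves a dominating set. In fact your treatment is more careful than the paper's, which writes $Dom((D\setminus\{y\})\cup\{x\})$ without explaining how the swapped set is encoded in the logic (and contains the ill-typed clause ``$\exists y\in Dom(D)$''), whereas you correctly inline the membership predicate $\bigl((w\in S\wedge w\neq u)\vee w=v\bigr)$ and flag the variable-capture issue.
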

\begin{proof}
Let $G = (V,E)$ be a graph and $k$ be a positive integer. The $CMSOL$ formula expressing that the existence of a dominating set $D$ of $G$ of cardinality at most $k$ is,\\
$Dom(D) = (D \subseteq V ) \land (|D| \leq k) \land ((\forall x \in V )(\exists y \in V )((y \in D) \land (x \in N[y])))$
\\Using the above $CMSOL$ formula for dominating set $D$ of cardinality at most $k$, we give $CMSOL$ formula for the secure dominating set of $G$ of cardinality at most $k$ as follows,\\
$SDom(D) = Dom(D) \land ((\forall x \in V\setminus D )(\exists y \in Dom(D)) \land ((y \in N (x)) \land Dom((D \setminus \{y\}) \cup\{x\})))$
Hence, the result follows.
\end{proof}
\noindent The famous Courcelle’s theorem \cite{COURCELLE_MSOL} states that any problem which can be expressed as a $CMSOL$ formula is solvable in linear-time for graphs having bounded tree-width. From Courcelle theorem and  theorem \ref{CMSOL_sd}, the following result directly follows.
\begin{theorem}
For bounded tree-width graphs, the $SDom$ problem is solvable in linear- time.
\end{theorem}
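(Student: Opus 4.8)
The plan is to obtain the result as an immediate corollary of Courcelle's theorem, using the logical characterization already established in the preceding theorem. First I would recall the statement of Courcelle's theorem \cite{COURCELLE_MSOL} in the form needed here: for each fixed constant $w$, every graph property expressible by a $CMSOL$ sentence can be decided in time linear in $|V|+|E|$ on the class of graphs whose tree-width is at most $w$. Since a bounded tree-width graph is, by definition, a graph whose tree-width is bounded by some constant $w$, this theorem applies verbatim to our setting, and no additional structural preprocessing is required beyond the (linear-time) computation of a width-$w$ tree decomposition that Courcelle's theorem already assumes or constructs.

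The second ingredient is Theorem \ref{CMSOL_sd}, which exhibits an explicit $CMSOL$ formula $SDom(D)$ asserting that $D$ is a secure dominating set of cardinality at most $k$. Combining the two, the sentence $(\exists D)\, SDom(D)$ is itself a $CMSOL$ sentence, so by Courcelle's theorem its truth can be evaluated in linear time on any bounded tree-width graph. This settles the decision version of the problem directly.

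Before concluding I would verify the one point that is not purely mechanical: the formula $SDom(D)$ contains the sub-expression $Dom((D \setminus \{y\}) \cup \{x\})$, that is, the domination predicate evaluated on the set obtained from $D$ by a single swap. This is legitimate in $CMSOL$ because the set $(D \setminus \{y\}) \cup \{x\}$ is definable from the free variables $D$, $x$, $y$; hence the sub-formula can be rewritten as a genuine $CMSOL$ formula over $(V,E)$ with free variables $D, x, y$ by textual substitution of the corresponding membership conditions. Once this well-formedness is confirmed, the $CMSOL$ status of $SDom$ is unambiguous and the hypotheses of Courcelle's theorem are met.

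Finally, I would record the passage from the decision problem to the minimization problem $MSD$. The cleanest route is to invoke the optimization strengthening of Courcelle's theorem, which computes a minimum-cardinality $CMSOL$-definable set in linear time; alternatively one evaluates the decision predicate for the least $k$ for which it holds. I do not expect a genuine obstacle anywhere in this argument, as the whole statement is a corollary of results assumed available; the only care needed is in checking the well-formedness of the swap sub-formula noted above, which is routine.
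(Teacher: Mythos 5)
Your proposal is correct and follows essentially the same route as the paper: the result is obtained as an immediate corollary of Courcelle's theorem applied to the $CMSOL$ formulation of secure domination from Theorem \ref{CMSOL_sd}. Your additional checks (well-formedness of the swap sub-formula and the passage to the optimization version) are sensible refinements of the same argument rather than a different approach.
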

		\section{Conclusion and Future work:}
		We investigated the complexity of \textit{Secure Domination Problem} and proved that the problem is NP-complete on star convex and comb convex split graphs and bisplit graphs. Having found that $SDD$ on bisplit graphs is NP-complete, our next focus was to analyse the complexity of $SDD$ by restricting the cycle length. We observed that $SDD$ was polynomial-time solvable on chordal bisplit graphs and it is the presence of cycles of length four which makes the problem hard on bisplit graphs, thereby delineating the boundary between tractable and intractable cases. Also $MSD$ is polynomial-time solvable in chain graphs. We  prove that the secure domination problem is hard to be approximated for a bisplit graph within a factor of $(1-\epsilon)~ln~|V|$ for any $\epsilon > 0$, unless $NP \subseteq DTIME(|V|^{O(log~log~|V|)})$. Motivated by the established inapproximability result, approximation algorithms for secure domination problem on bisplit graphs will be investigated as part of our future work. Since $MSD$ is solvable in linear time on graphs of bounded tree-width, an intresting direction for future work is to establish a parameterized algorithm for the problem on $k-trees$, where $k$ is fixed.
	\bibliographystyle{ieeetr}
\bibliography{references} % my bib file references.bib
\end{document}